\title{Faster Monotone Min-Plus Product, Range Mode, and Single Source Replacement Paths}
\titlerunning{Faster Monotone Min-Plus Product, Range Mode, and SSRP}
\author{Yuzhou Gu}{Massachusetts Institute of Technology, United States}{yuzhougu@mit.edu}{}{}
\author{Adam Polak}{École Polytechnique Fédérale de Lausanne, Switzerland}{adam.polak@epfl.ch}{https://orcid.org/0000-0003-4925-774X}{Supported by the Swiss National Science Foundation within the project \emph{Lattice Algorithms and Integer Programming} (185030). Part of this work was done at Jagiellonian University, supported by Polish National Science Center grant 2017/27/N/ST6/01334}
\author{Virginia {Vassilevska Williams}}{Massachusetts Institute of Technology, United States}{virgi@mit.edu}{}{Supported by an NSF CAREER Award, NSF Grants CCF-1528078, CCF-1514339 and CCF-1909429, a BSF Grant BSF:2012338, a Google Research Fellowship and a Sloan Research Fellowship.}
\author{Yinzhan Xu}{Massachusetts Institute of Technology, United States}{xyzhan@mit.edu}{}{Supported by NSF Grant CCF-1528078.}
\authorrunning{Y.\,Gu, A.\,Polak, V.\,Vassilevska Williams, and Y.\,Xu}
\keywords{APSP, Min-Plus Product, Range Mode, Single-Source Replacement Paths}
\tikzstyle{vertex}=[circle, draw, inner sep=0pt, minimum size=4pt, fill = black]
\newtheorem{problem}[theorem]{Problem}
\def \eps {\varepsilon}
\newcommand{\ignore}[1]{}
\def \polylog { \text{\rm polylog} }
\def\tO{\tilde{O}}
\def\ti{\tilde}
\def\p{\prime}
\newcommand{\IGNORE}[1]{}
\DeclareMathOperator*{\argmin}{arg\,min}
\begin{document}

\maketitle

\begin{abstract}
One of the most basic graph problems, 
All-Pairs Shortest Paths (APSP) is known to be solvable in $n^{3-o(1)}$ time, and it is widely open whether it has an $O(n^{3-\epsilon})$ time algorithm for $\epsilon > 0$. To better understand APSP, one often strives to obtain subcubic time algorithms for structured instances of APSP and problems equivalent to it, such as the Min-Plus matrix product.

A natural structured version of Min-Plus product is Monotone Min-Plus product which has been studied in the context of the Batch Range Mode [SODA'20] and Dynamic Range Mode [ICALP'20] problems. 
This paper improves the known algorithms for Monotone Min-Plus Product and for Batch and Dynamic Range Mode, and establishes a connection between Monotone Min-Plus Product and the Single Source Replacement Paths (SSRP) problem on an $n$-vertex graph with potentially negative edge weights in $\{-M, \ldots, M\}$.

SSRP with positive integer edge weights bounded by $M$ can be solved in $\tilde{O}(Mn^\omega)$ time, whereas the prior fastest algorithm for graphs with possibly negative weights [FOCS'12] runs in $O(M^{0.7519} n^{2.5286})$ time, the current best running time for directed APSP with small integer weights. Using Monotone Min-Plus Product, we obtain an improved $O(M^{0.8043} n^{2.4957})$ time SSRP algorithm, showing that SSRP with constant negative integer weights is likely easier than directed unweighted APSP, a problem that is believed to require $n^{2.5-o(1)}$ time. 

	Complementing our algorithm for SSRP, we give a reduction from the Bounded-Difference Min-Plus Product problem studied by Bringmann et al.~[FOCS'16] to negative weight SSRP. This reduction shows that it might be difficult to obtain an $\tilde{O}(M n^{\omega})$ time algorithm for SSRP with negative weight edges, thus separating the problem from SSRP with only positive weight edges.
	
\end{abstract}


\section{Introduction}

The All-Pairs Shortest Paths problem (APSP) is one of the most well-known problems in graph algorithms. Following the classical Floyd–Warshall algorithm that solves APSP in $O(n^3)$ time in $n$-vertex edge-weighted graphs, a long list of papers have been dedicated to improving the APSP running time. The current best algorithm by Williams~\cite{williams2014faster} runs in $n^3/2^{\Theta(\sqrt{\log n})}$ time.
It is a big open problem whether a truly sub-cubic, $O(n^{3-\epsilon})$-time for $\epsilon > 0$, algorithm for APSP exists. In fact, the popular APSP hypothesis from Fine-Grained Complexity \cite{virgisurvey} asserts that this is not the case. 

In the Min-Plus Product problem, one is given two $n\times n$ integer matrices $A, B$ and is required to compute  an $n\times n$ matrix $C=A\star B$ such that $C_{i, j} = \min_k \{ A_{i,k}+B_{k, j} \}$. Fischer and Meyer \cite{fischermeyer} showed that  Min-Plus Product is equivalent to APSP,
in the sense that a $T(n)$ time algorithm for either of the problems immediately implies an $O(T(n))$ time algorithm for the other.
The APSP hypothesis
thus states
that computing the min-plus product of $n\times n$ integer matrices requires $n^{3-o(1)}$ time\footnote{In fine-grained complexity one needs to fix the model of computation for each hardness hypothesis, and the APSP hypothesis is typically stated for a word RAM with $O(\log n)$ bit words, which is the model the algorithms in our paper are in.}.

\subparagraph*{Structured Min-Plus Products.}
To better understand the complexity of APSP, much research focuses on improving the running time for Min-Plus Product when one or both of the matrices have some structure, with the hope that eventually all instances can be handled.
As a fundamental problem, Min-Plus Product can be used to solve many other problems. It turns out that in many cases a structured version of Min-Plus Product suffices \cite{VX20, BGSV19}.  Thus, studying structured instances of Min-Plus Product has the potential to speed up the running times for many applications.

Alon,  Galil and Margalit \cite{AlonGM97} first studied the Min-Plus Product of structured matrices. They showed, following ideas of Yuval~\cite{Yuval76}, that if all entries of two $n\times n$ matrices $A, B$ are integers in $\{-M, \ldots, M\} \cup \{\infty\}$, then one can compute the min-plus product of $A$ and $B$ in $\tilde{O}(Mn^\omega)$ time\footnote{Throughout the paper the $\tilde{O}$ notation hides subpolynomial factors.}, where $\omega \in [2,2.373)$ denotes the best possible exponent of square matrix multiplication~\cite{Vassilevska12,LeGall14,AVW21}.

Yuster~\cite{yusterdom} considered Min-Plus Product when one of the matrices has a small number of distinct entries in each row, generalizing \cite{AlonGM97}.
Bringmann et al.~\cite{BGSV19} studied Min-Plus Product of bounded-difference matrices, generalizing \cite{AlonGM97,yusterdom}. An integer matrix is called to have \textit{bounded differences} if all pairs of adjacent entries (both horizontally and vertically) differ by at most $O(1)$. Bringmann et al.~\cite{BGSV19} gave an
$O(n^{2.8244})$ time algorithm for computing the Min-Plus Product between two bounded-difference matrices. When $\omega = 2$, their algorithm runs in
$O(n^{2.7554})$ time. They also studied variants of this problem including the case when only one matrix is guaranteed to have bounded differences, and the bounded-differences are only in the rows or only in the columns.

Chan \cite{chan2010more} gave a truly sub-cubic time algorithm for the min-plus product between \textit{geometrically weighted} matrices using a geometric tool called the \textit{partition theorem}. Recently, Vassilevska Williams and Xu \cite{VX20} combined the approach of Bringmann et al.~\cite{BGSV19}  and a geometric data structure
to give a truly subcubic Min-Plus Product algorithm for integer matrices where one of the matrices has constant $O(1)$-approximate rank, further generalizing the results of \cite{BGSV19} and partially \cite{chan2010more}.

\subparagraph*{Our contribution.} In this work, we study the Min-Plus Product of a {\em monotone} integer matrix with an arbitrary\footnote{Throughout the paper we assume the entries of the matrices are $\polylog(n)$-bit integers or $\infty$ unless otherwise stated.} integer matrix. We defer the general definition of {\em Monotone Min-Plus Product} to Section~\ref{sec:prelim}. For now let us focus on an interesting special case: We are given an arbitrary $n\times n$ integer matrix $A$ and an $n\times n$ matrix $B$ whose entries are positive integers bounded by $O(n)$, and such that each row of $B$ is non-decreasing.

The above special case already subsumes the Min-Plus Product of bounded-difference matrices studied by Bringmann et al.: Suppose we are asked to compute the min-plus product of matrices $A$ and $B$ where $B$  has bounded differences. In other words, all pairs of adjacent entries (both horizontally and vertically) differ by at most some constant $M$. We can create a matrix $B'$ so that $B'_{k, j} = B_{k,j} - B_{1, 1} + jM$. It is easy to check that the matrix $B'$ satisfies the above-mentioned special case definition of monotone matrix. Thus, we can use our algorithm to compute the min-plus product $C'=A\star B'$. Then it is easy to recover $C=A\star B$ by setting $C_{i,j}=C'_{i,j}+B_{1,1}-jM$. Therefore, Monotone Min-Plus Product is more general than the Min-Plus Product of an arbitrary matrix with a bounded-difference matrix.

Monotone Min-Plus Product was first studied by Vassilevska Williams and Xu \cite{VX20} as a tool to give a fast algorithm for the \emph{Batch Range Mode} problem. In their work, the authors devise a black-box reduction from Monotone Min-Plus Product to their Min-Plus Product algorithm for matrices with a small $O(1)$-approximate rank. 
Their algorithm runs in $\tO(n^{(15+\omega)/6})=O(n^{2.8955})$ time for the above-mentioned special case. We improve and generalize their algorithm. Below is a special case of our main theorem, which will be introduced in Section~\ref{sec:monotone}.
\begin{restatable}{theorem}{ThmMonotone}
	The min-plus product $A\star B$ of two $n \times n$ matrices where entries of $B$ are non-negative integers bounded by $O(n)$ and each row of $B$ is non-decreasing can be computed deterministically in $\tO(n^{\frac{12+\omega}{5}})$ time.
	Using the current best bound on fast rectangular matrix multiplication  the running time improves to $O(n^{2.8653})$.
\end{restatable}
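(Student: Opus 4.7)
The approach combines value chunking with Alon--Galil--Margalit (AGM) min-plus products, using the monotonicity of $B$ to control the support pattern of each chunk. Fix a parameter $\Delta$ (to be optimized). Partition the value range $[0,n]$ of $B$ into $O(n/\Delta)$ chunks of size $\Delta$, and for each chunk index $s$ let $B^{(s)}$ be the matrix obtained from $B$ by keeping only entries in $[s\Delta,(s+1)\Delta)$ and setting all other entries to $+\infty$. The monotonicity of each row of $B$ implies that the non-$\infty$ entries in any row of $B^{(s)}$ form a single contiguous interval $I_k^{(s)}\subseteq\{1,\dots,n\}$, and the intervals $\{I_k^{(s)}\}_s$ partition $\{1,\dots,n\}$ for each row $k$, so $\sum_s|I_k^{(s)}|=n$. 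It suffices to compute $A\star B^{(s)}$ for each $s$ and then take the coordinate-wise minimum across $s$ (after a constant shift).

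Introduce a second parameter $W$. Call a pair $(k,s)$ \emph{wide} if $|I_k^{(s)}|\ge W$ and \emph{narrow} otherwise. Summing the mass bound across rows shows that any fixed chunk $s$ has at most $n/W$ wide rows, and the total number of wide pairs is at most $n^2/W$. For the wide rows of each chunk, I apply AGM to the rectangular min-plus product of $A$ with the restriction of $B^{(s)}$ to those rows; since the non-$\infty$ entries lie in a range of size $\Delta$, the cost for one chunk is $\tilde{O}(\Delta\cdot\mathrm{MM}(n,n/W,n))$, which sums to $\tilde{O}(n\cdot\mathrm{MM}(n,n/W,n))$ over all $n/\Delta$ chunks. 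For the narrow rows of each chunk, I group them by the $O(W)$-column window containing their interval and apply AGM on each window separately, so that every AGM instance works on a thin $n\times d\times O(W)$ product with entries bounded by $\Delta$; telescoping the group sizes using $\sum_s|I_k^{(s)}|=n$ then aggregates the narrow cost to a quantity of the form $\tilde{O}(n\cdot\mathrm{MM}(n,n,W))$.

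The final step is to balance the wide and narrow costs by choosing $W$ and $\Delta$ as optimal powers of $n$ against the best known bounds on rectangular matrix multiplication. Using only the square exponent $\omega$ gives the clean expression $\tilde{O}(n^{(12+\omega)/5})$; plugging in the state-of-the-art rectangular exponents sharpens this to $O(n^{2.8653})$. The hard part will be the narrow step: although grouping narrow intervals into $W$-column windows is natural, the rows feeding different windows are not structurally related, so one must carefully track which windows each row contributes to and use the mass bound $\sum_{k,s}|I_k^{(s)}|=n^2$ together with the superadditivity of rectangular MM in the middle dimension to amortize the cost. A secondary difficulty is the case analysis needed to identify the regime of rectangular exponents that governs the optimal balance for $(W,\Delta)$.
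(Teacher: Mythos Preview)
There is a genuine gap. First, the claim that ``any fixed chunk $s$ has at most $n/W$ wide rows'' is false: the mass identity $\sum_s |I_k^{(s)}| = n$ holds \emph{per row} $k$, and summing over rows only yields $\sum_{k,s}|I_k^{(s)}|=n^2$, hence at most $n^2/W$ wide pairs \emph{in total}, not $n/W$ per chunk. A single chunk can have all $n$ rows wide (take $B\equiv 0$). Second, and more decisively, even granting your stated costs, the parameter $\Delta$ cancels from both of them: the wide cost $(n/\Delta)\cdot\Delta\cdot\mathrm{MM}(n,n/W,n)=n\cdot\mathrm{MM}(n,n/W,n)$ and the narrow cost $n\cdot\mathrm{MM}(n,n,W)$ are both $\Delta$-free, so there is only one parameter $W=n^b$ left to optimize. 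Using $\omega(1,c,1)\le 2-c(2-\omega)$ for $c\le 1$, the two costs balance at $b=1/2$ with exponent $2+\omega/2$, which is $\ge 3$ for every $\omega\ge 2$. So the scheme as written does not beat brute force, let alone reach $(12+\omega)/5$.

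The paper's argument is structurally quite different and does not proceed by value-chunking $B$. It uses the three-phase framework of Bringmann et al.\ and Vassilevska~Williams--Xu. Phase~1 downscales $A,B$ by $W=n^\theta$ and computes an $\ell_\infty$-approximation $\tilde C$; the monotone-specific ingredient is that after downscaling, $\tilde B$ has only $O(n^{2-\theta})$ positions $(k,j)$ with $\tilde B_{k,j}\ne\tilde B_{k,j+1}$, so for each $i$ one can sweep $j$ while maintaining the multiset $\{\tilde A_{i,k}+\tilde B_{k,j}:k\}$ in total time $\tO(n^{3-\theta})$. Phase~2 samples $\tO(n^\rho)$ columns $j^r$ and forms matrices $A^r$ whose finite entries are bounded by $O(n^\theta)$ (by subtracting $\tilde C_{i,j^r}$), enabling AGM-style bounded-entry min-plus; this covers all but $O(n^{3-\rho})$ ``relevant'' triples. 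Phase~3 enumerates those remaining triples via the same sweep trick. Balancing $\theta$ and $\rho$ gives $(12+\omega)/5$. None of these pieces---the downscaled approximation, the random covering, or the change-count sweep that actually exploits monotonicity---are present in your plan, and I do not see how to reach the target exponent from interval grouping and AGM alone.
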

\noindent If $\omega=2$, our improvement is from $\tO(n^{17/6}) \le O(n^{2.8334})$ time to $\tO(n^{14/5}) = \tO(n^{2.8})$ time.
We provide several interesting applications of our improved algorithm for Monotone Min-Plus Product.

\subsection{Applications}

\subparagraph*{Single Source Replacement Paths.}
The main contribution of this paper is establishing a relationship between Monotone Min-Plus Product and the Single-Source Replacement Paths (SSRP) problem. In the SSRP problem, one is given a directed edge-weighted graph $G$ and a source vertex $s$, and is asked to compute for each edge $e$, $d_G(s, v, e)$'s, the shortest path distances from $s$ to each vertex $v$ in $G\setminus \{e\}$. Note that the interesting case is when $e$ belongs to a shortest paths tree rooted at $s$, so that there are only $O(n^2)$ distances to report.

The trivial algorithm for SSRP runs in $\tilde{O}(n^3)$ time: For each edge $e$ on the shortest path tree rooted at $s$, run Dijkstra's algorithm on the graph with $e$ removed. Negative edge weights can be handled with Johnson's trick~\cite{Johnson1977}, without increasing the asymptotic complexity.
Vassilevska Williams and Williams \cite{williams2018subcubic} showed that APSP and SSRP are sub-cubically equivalent. Hence,
assuming the APSP hypothesis, there is no $O(n^{3-\epsilon})$ time algorithm for SSRP in graphs with arbitrary integer weights, for any $\epsilon > 0$. There seems to be little hope to improve upon the trivial algorithm for the general case.

Grandoni and Vassilevska Williams~\cite{GVW2019} studied SSRP in graphs with integer edge weights of small absolute value. They gave an algorithm that solves SSRP in directed $n$-vertex graphs with edge weights in $\{-M, \ldots, M\}$ in $\tilde{O}(M^{\frac{1}{4-\omega}}n^{2+\frac{1}{4-\omega}})$ time, which would become $\tO(M^{0.5}n^{2.5})$ if $\omega=2$. For positive weights only, they reduce the runtime to $\tO(Mn^\omega)$.

Let us consider the special case $M=1$. Here, the algorithms of \cite{GVW2019} solve SSRP with positive weights $1$ in $\tilde{O}(n^\omega)$ time, while the $\tO(n^{2+\frac{1}{4-\omega}})$ runtime for SSRP with weights in $\{-1,0,1\}$ is the same as the runtime for APSP with weights in $\{-1,0,1\}$. 

As APSP in graphs with arbitrary integer weights is fine-grained equivalent to SSRP with arbitrary integer weights \cite{williams2018subcubic}, it is possible that APSP with weights in $\{-1,0,1\}$ could be fine-grained equivalent to SSRP with weights in $\{-1,0,1\}$. 

It is believed that APSP in directed graphs with weights in $\{-1,0,1\}$ (and even for unweighted graphs) requires $n^{2.5-o(1)}$ time \cite{Lincoln20,WX20FOCS}, as the best known algorithm by Zwick \cite{Zwick02} would run in $\Omega(n^{2.5})$ time even if $\omega=2$. 
As SSRP with small {\em positive} weights is in $O(n^{2.5-\eps})$ time for $\eps>0$ \cite{GVW2019}, it is likely not fine-grained equivalent to directed unweighted  APSP. Beating the APSP runtime for SSRP with {\em negative} weights is an open problem.

This leads to the following interesting questions.

\begin{center}
\emph{(1) Is SSRP with negative weights inherently harder than SSRP with only positive weights?\\ (2) Or, is it possible to improve the running time of SSRP with negative weights, possibly below $n^{2.5}$, thus showing that it is likely not as hard as directed unweighted APSP?}
\end{center}

Quite surprisingly, we give positive answers to both of these questions.
First, we improve over the running time of \cite{GVW2019} for negative weights.

\begin{restatable}{theorem}{ThmSSRPUpperBound}\label{thm:SSRP_upper_bound}
There is a randomized algorithm that solves SSRP in a directed $n$-vertex graph with edge weights in $\{-M, \ldots, M\}$ in $\tO(M^{\frac{5}{17-4\omega}}n^{\frac{36-7\omega}{17-4\omega}})$ time, with high probability. Using the current best bound on fast rectangular matrix multiplication  the running time improves to $O(M^{0.8043}n^{2.4957})$.
\end{restatable}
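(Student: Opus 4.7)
\emph{Proof plan.} The plan is to follow the SSRP framework of Grandoni and Vassilevska Williams~\cite{GVW2019} and replace its bottleneck Min-Plus Product subroutine with our faster Monotone Min-Plus Product, then re-balance the parameters.

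First I would compute a shortest-paths tree $T$ rooted at $s$ in $O(mn)$ time via Bellman-Ford, so that it suffices to report $d_G(s,v,e)$ for $e \in T$. Fix a hop-length threshold $L$ to be optimized later. Replacement paths with at most $L$ edges are handled by an $L$-hop relaxation scheme in $\tO(n^2 L)$ time, as in~\cite{GVW2019}. For longer replacement paths, sample a set $S$ of $\tO(n/L)$ vertices; with high probability every such path passes through some $u\in S$, so $d_G(s,v,e)=\min_{u\in S}\bigl(d_G(s,u,e)+d_G(u,v)\bigr)$. I would compute $d_G(s,u,e)$ for each $u\in S$ and $e\in T$ by restricted searches, arrange them into a matrix $A$ indexed by $e$ and $u$, and let $B=(d_G(u,v))_{u\in S,\,v\in V}$. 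The long replacement distances are then entries of the Min-Plus Product $A\star B$.

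The crucial observation is that, under an appropriate column ordering, two entries $d_G(u,v)$ and $d_G(u,v')$ in adjacent columns of $B$ differ by at most $M$, since traversing a single graph edge changes the distance by at most $M$. As already noted in the introduction, such a bounded-difference matrix is made monotone by adding $jM$ to every entry in column $j$: the resulting matrix $B'$ has non-decreasing rows and entries in a range of size $O(Mn)$, and $A\star B$ is recovered from $A\star B'$ by subtracting the shift. Computing $A\star B'$ via a (rectangular, scaled) variant of our Monotone Min-Plus Product algorithm gives a running time that depends on $M$, $n$, and $|S|$ through rectangular matrix multiplication exponents; setting $L$ to balance this cost against $\tO(n^2 L)$ then yields the claimed bound $\tO(M^{5/(17-4\omega)} n^{(36-7\omega)/(17-4\omega)})$. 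The hardest step will be adapting the Monotone MPP algorithm to rectangular dimensions and entry range $O(Mn)$ with a runtime scaling in $M$ and $n/L$ precisely as needed to yield these exponents, in particular tracing $M$ and $|S|$ through every rectangular matrix multiplication inside the algorithm and checking that a single global column ordering makes every row of $B$ monotone simultaneously.
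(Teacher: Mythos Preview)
Your proposal has a genuine gap at its crucial step: the claimed bounded-difference structure of $B=(d_G(u,v))_{u\in S,\,v\in V}$ does not hold. For adjacent columns $v,v'$ to satisfy $|d_G(u,v)-d_G(u,v')|\le M$ for \emph{every} row $u$ simultaneously, you would need a single global ordering of $V$ in which every pair of consecutive vertices is joined by edges of weight at most $M$ in both directions---essentially a Hamiltonian path. No such ordering exists in general graphs; you flag this (``checking that a single global column ordering makes every row of $B$ monotone simultaneously'') as the hardest step, but it is in fact false rather than merely hard.

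The paper obtains monotonicity in a completely different way. It first reduces SSRP to the \emph{subpath problem} via the recursive tree-decomposition of~\cite{GVW2019}, so that all failed edges $e$ lie on a single path $P=(s=s_1\to\cdots\to s_{|P|}=t)$. After splitting into jumping versus departing paths, and hop-short versus hop-long departing paths, the hop-long case (with sampled set $B\subseteq V$) sets
\[A_{i,b}=\min_{j\le i}\bigl(d_G(s,s_j)+d^{YZ}(s_j,b)\bigr),\qquad B_{b,v}=d^{YZ}(b,v).\]
Here $A_{i,b}$ is a prefix minimum in $i$, so each column of $A$ is automatically non-increasing: monotonicity is a structural consequence of the edges $e$ being linearly ordered along $P$, not a property of the distance matrix. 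This is the idea your plan is missing. Two secondary issues: your $\tO(n^2L)$ bound for hop-short replacement paths does not handle negative weights (the paper instead uses a truncated run of Zwick's APSP at cost $\tO(Mn^{\zeta+\omega(1,1-\zeta,1)})$), and computing all $d_G(s,u,e)$ for $u\in S$ and $e\in T$ ``by restricted searches'' is itself an SSRP-type computation whose cost you have not accounted for.
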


Notably, when $M$ is small enough, the running time $O(M^{0.8043}n^{2.4957})$ is polynomially faster than $n^{2.5}$, and hence faster than the best known running time of APSP in directed unweighted graphs which is $\Omega(n^{2.5})$ even if $\omega=2$. This answers our question (2) above.
If $\omega = 2$, our running time for SSRP with negative weights is $\tO(M^{5/9}n^{22/9}) \le O(M^{0.556}n^{2.445})$.

APSP in directed graphs with edge weights in $\{-1,0,1\}$ is one of long list of so-called \emph{intermediate} graph and matrix problems~\cite{Lincoln20,WX20FOCS}, whose running time lies between $\tO(n^\omega)$ and $\tO(n^3)$ and becomes $\tO(n^{2.5})$ when $\omega=2$.
Our result shows that SSRP with bounded negative integer weights is not an intermediate problem. We remark that recently Grandoni et al.~\cite{Grandoni21} showed that another (ex-)candidate intermediate problem, All-Pairs LCA in DAGs, can actually be solved faster than $O(n^{2.5})$ time.

We prove Theorem~\ref{thm:SSRP_upper_bound} by improving the runtime of the so-called \emph{subpath problem}, which is the bottleneck in the algorithm of \cite{GVW2019}. Grandoni and Vassilevska Williams solve it by reducing to APSP in directed graphs with edge weights in $\{-M, \ldots, M\}$, and applying Zwick's APSP algorithm~\cite{Zwick02}. We show that the APSP computation can be rearranged so that certain min-plus products that appear throughout involve monotone matrices. 

Next, we identify an obstacle to obtaining a $\tilde{O}(Mn^\omega)$ time algorithm for SSRP with negative weights, addressing our question (1).

\begin{restatable}{theorem}{ThmSSRPLowerBound}\label{thm:SSRP_lower_bound}
If there exists a $T(n)$ time algorithm for SSRP in $n$-vertex graphs with edge weights in $\{-1, 0, 1\}$, then there exists an $O(T(n) \sqrt{n})$ time algorithm for the Bounded-Difference Min-Plus Product of $n \times n$ matrices.
\end{restatable}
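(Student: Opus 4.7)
The plan is to reduce Bounded-Difference Min-Plus Product on $n \times n$ matrices $A, B$ (with BD constant $c$) to $O(\sqrt n)$ SSRP calls on $O(n)$-vertex graphs with weights in $\{-1,0,1\}$. I first partition $[n]$ into $\sqrt n$ contiguous blocks $K_1,\ldots,K_{\sqrt n}$ of size $\sqrt n$ and compute, for each block $K$, the partial product $C^K_{i,j} := \min_{k \in K}(A_{i,k} + B_{k,j})$ with a single SSRP call; the final matrix $C_{i,j} = \min_t C^{K_t}_{i,j}$ is then assembled in $O(n^{5/2})$ extra time, which is absorbed into $O(T(n)\sqrt n)$ since $T(n) = \Omega(n^2)$. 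Within a block $K$, I fix $k_0 \in K$ and pass to residuals $\widetilde A_{i,k} := A_{i,k} - A_{i,k_0}$ and $\widetilde B_{k,j} := B_{k,j} - B_{k_0,j}$; by BD and the diameter $\sqrt n$ of $K$, these have magnitude $O(\sqrt n)$. Shifting by a constant $M = O(\sqrt n)$ to make them non-negative, it suffices to compute $\widehat C^K_{i,j} := \min_{k \in K}(\widetilde A_{i,k}+\widetilde B_{k,j}+2M)$; the known $A_{i,k_0}+B_{k_0,j}-2M$ is added back afterwards.

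I construct a graph $G_K$ with: a source $s$; a biasing chain $s = z_0 \to z_1 \to \cdots \to z_{(2c+1)n}$ of weight-$(-1)$ edges so that $d(s,z_t)=-t$; row vertices $u_i$ ($i \in [n]$), each reachable only via $z_{(2c+1)i} \to u_i$ of weight $0$, so that $d(s,u_i) = -(2c+1)i$; for every $k \in K$, two consecutive forward unit-weight ladders $v_k^{A,0}\to\cdots\to v_k^{A,2M}=v_k=v_k^{B,0}\to\cdots\to v_k^{B,2M}$ together with weight-$0$ jumps $u_i \to v_k^{A, M-\widetilde A_{i,k}}$ and $v_k^{B, M+\widetilde B_{k,j}} \to w_j$, so that $d(u_i,v_k)=\widetilde A_{i,k}+M$ and $d(v_k,w_j)=\widetilde B_{k,j}+M$; and column vertices $w_j$ ($j \in [n]$). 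Each ladder has $O(\sqrt n)$ vertices and there are $|K|=\sqrt n$ of them, so $G_K$ has $O(n)$ vertices in total with all weights in $\{-1,0,1\}$, and $d(u_i,w_j)=\widehat C^K_{i,j}$ follows immediately.

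I designate $e_i := (z_{(2c+1)i}, z_{(2c+1)i+1})$ as the row-$i$ gate edge. Because $\widehat C^K$ inherits bounded differences in its row index from $A$ with BD constant at most $2c$ (via a standard min-plus BD argument), the function $p \mapsto -(2c+1)p + \widehat C^K_{p,j}$ is strictly decreasing in $p$: the bias slope $2c+1$ strictly exceeds the BD constant of $\widehat C^K$. Hence the shortest $s$-to-$w_j$ path traverses the entire biasing chain and exits at $u_n$, and deleting $e_i$ disconnects $z_{(2c+1)i+1},\ldots$ from $s$, making $u_p$ reachable only for $p \le i$; by the same strict inequality, the shortest alternative exits uniquely at $p = i$, giving
\[
d_{G_K}(s, w_j, e_i) = -(2c+1)i + \widehat C^K_{i,j}.
\]
Thus $\widehat C^K_{i,j}$, and hence $C^K_{i,j}$, are recovered (for $i=n$ one reads the default $d_{G_K}(s,w_j)$). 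Running one SSRP call per block gives $O(\sqrt n)$ calls on $O(n)$-vertex $\{-1,0,1\}$-weighted graphs, for a total of $O(T(n)\sqrt n)$.

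The main obstacle is this extraction step. A naive chain $s \to u_1 \to u_2 \to \cdots$ with unit-weight edges would, upon deleting one chain edge, reveal only the prefix minimum $\min_{p \le i} \widehat C^K_{p,j}$ rather than the individual entry. The bias slope $2c+1$, chosen to strictly exceed the BD constant of the product matrix $\widehat C^K$, is what breaks this prefix-minimum degeneracy and lets a single SSRP call produce all $n^2$ correct entries of $C^K$. Verifying that $\widehat C^K$ is bounded-difference in the row direction, and that no alternative path in $G_K$ bypasses the biasing chain (so the argmin-at-$i$ claim above is genuine), is the technical core of the argument.
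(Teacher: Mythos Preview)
Your argument is correct and follows the same high-level outline as the paper: split the inner dimension into $\sqrt n$ blocks, normalise each block so that the relevant entries have magnitude $O(\sqrt n)$, encode those small weights as unit-weight ladders of length $O(\sqrt n)$, and use a single SSRP call per block on an $O(n)$-vertex $\{-1,0,1\}$-weighted graph. The computation that the partial product $\widehat C^K$ is bounded-difference in the row index (with constant $2c$), and hence that a bias slope of $2c+1$ forces the prefix minimum $\min_{p\le i}\bigl(-(2c+1)p+\widehat C^K_{p,j}\bigr)$ to be attained at $p=i$, is sound.

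Where you differ is in how the prefix-minimum obstacle is dispatched. The paper factors the reduction through an intermediate problem, \emph{Ham-APSP} (APSP on an unweighted directed graph in which a Hamiltonian path is given). First, bounded differences are used to add length-$2$ paths between consecutive row vertices without disturbing the relevant distances, which makes the whole ladder graph Hamiltonian; then a generic Ham-APSP $\to$ SSRP reduction uses the Hamiltonian path to bound $d_G(v_i,v_k)\le k-i$, which via the triangle inequality collapses the prefix minimum to the single term at $i$. Your approach bypasses this intermediate step entirely: you apply the BD property directly to the \emph{output} matrix $\widehat C^K$ and hard-wire the bias as an explicit weight-$(-1)$ chain inside the SSRP instance. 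Your construction is more direct and arguably cleaner for this specific theorem; the paper's route has the advantage of isolating Ham-APSP as a stand-alone problem that may be of independent interest.
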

Theorem~\ref{thm:SSRP_lower_bound} gives the following argument why SSRP with negative weights might be hard. The current best algorithm for Bounded-Difference Min-Plus Product runs in $O(n^{2.7554})$ time even if $\omega = 2$. If SSRP with weights $\{-1, 0, 1\}$ could be solved in $\tilde{O}(n^2)$ time (when $\omega=2$), then Bounded-Difference Min-Plus Product could be solved in $\tilde{O}(n^{2.5})$ time, which would be a breakthrough in structured Min-Plus Product algorithms.

Recently replacement paths problems have received increased attention~\cite{ChechikC19,AlonCC19,ChechikM20,ChechikN20}. None of these works is directly related to ours, because they focus either on the $s$-$t$ Replacement Paths problem (with both source and target nodes fixed), or on combinatorial algorithms (i.e.~without fast matrix multiplication) for sparse graphs.

\subparagraph*{Range Mode.}
 Given an array $a$ of elements, a range mode query asks for the most frequent element in a contiguous interval of $a$. In the Batch Range Mode problem the array $a$ is fixed and all range mode queries are given in advance. In the Dynamic Range Mode problem one starts with an empty array and has to support insertions and deletions, and handle queries in an online fashion.

Vassilevska Williams and Xu \cite{VX20} were the first to use structured Min-Plus Product in range mode algorithms. They reduced Batch Range Mode to a Min-Plus Product instance where both matrices have some monotone structures. Their techniques give an $O(n^{1.4854})$ time algorithm for Batch Range Mode on an array of size $n$ and $n$ queries. Since we improve over their Monotone Min-Plus Product algorithm, we naturally obtain a faster Batch Range Mode algorithm.

\begin{restatable}{theorem}{ThmBatch}\label{thm:batch-range-mode}
	The Batch Range Mode problem can be solved deterministically in time $\tO(n^\frac{21+2\omega}{15+\omega})$. Using the current best bound on fast rectangular matrix multiplication  the running time improves to
	$O(n^{1.4805})$.
\end{restatable}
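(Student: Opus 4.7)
The plan is to use the reduction from Batch Range Mode to Monotone Min-Plus Product established by Vassilevska Williams and Xu~\cite{VX20} as a black box, and to replace their Monotone Min-Plus Product subroutine with the faster algorithm guaranteed by our earlier theorem on Monotone Min-Plus Product (whose $n \times n$ version runs in $\tO(n^{(12+\omega)/5})$ time). Because both the VX20 reduction and our new Monotone Min-Plus Product algorithm are deterministic, the resulting Batch Range Mode algorithm is deterministic as well; no new combinatorial ideas about range mode are needed.

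In more detail, I would first recall the VX20 reduction. Given an array $a$ of length $n$ and $n$ range-mode queries, partition $a$ into $n/s$ blocks of length $s$ for a block-size parameter $s$ to be optimized. Queries whose endpoints are within $O(s)$ of a block boundary are handled in $\tO(ns)$ total time by precomputed block-level frequency information together with explicit scans of the residual elements. The remaining ``long'' queries reduce to a rectangular Min-Plus Product between a $\Theta(n/s) \times \Theta(s)$ matrix and a $\Theta(s) \times \Theta(n/s)$ matrix, encoding how candidate modes on one side of a query combine with continuation frequencies on the other; VX20 verified that both matrices satisfy the structural assumptions of Monotone Min-Plus Product, with entries bounded by $O(n)$. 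Plugging in the rectangular generalization of our improved algorithm yields a Monotone Min-Plus Product cost $T(n,s)$ depending on $s$. Setting $s = n^{\alpha}$ and balancing the two terms $\tO(ns)$ and $T(n,s)$ gives a single equation in $\alpha$; solving it and substituting back produces the claimed exponent $(21+2\omega)/(15+\omega)$. The sharper bound $O(n^{1.4805})$ follows by using current rectangular matrix multiplication exponents wherever the internal algorithm multiplies rectangular matrices, rather than estimating each such product through the square exponent $\omega$.

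\textbf{Main obstacle.} The chief technical point is the rectangular extension of our Monotone Min-Plus Product algorithm and the associated exponent bookkeeping. The algorithm internally invokes fast matrix multiplication on rectangles whose aspect ratios depend on the input dimensions, so to reach the stated exponent $(21+2\omega)/(15+\omega)$ one must propagate rectangular matrix multiplication costs through the recursion rather than apply the $n \times n$ runtime as a black box with $n$ replaced by the largest side. Once the rectangular cost $T(n,s)$ is written in closed form, the parameter balancing reduces to a routine single-variable optimization, and determinism, monotonicity verification, and entry bounds are all directly inherited from~\cite{VX20}.
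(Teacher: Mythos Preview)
Your high-level plan—invoke the existing reduction from Batch Range Mode to Monotone Min-Plus Product and plug in the improved MMP algorithm—is exactly what the paper does. However, your description of the reduction is wrong in a way that matters for the exponent, so as written the proposal does not reach $(21+2\omega)/(15+\omega)$.

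First, the shape of the matrices. The reduction the paper uses (stated as \cite[Theorem~6.1]{SX20}, building on \cite{VX20}) does \emph{not} yield a $(n/s)\times s$ by $s\times(n/s)$ product. With parameter $\tau$ (morally $s=n^\tau$), it yields a \emph{square} instance with all three dimensions $n^{1-\tau}$, plus an additive $\tO(n^{1+\tau})$ term; i.e., the cost is $\tO(n^{\max\{m(1-\tau,1-\tau,1-\tau,\tau),\,1+\tau\}})$. The inner dimension is the number of \emph{heavy values}, not the block length, and it equals $n^{1-\tau}$, not $n^\tau$. With your stated rectangular dimensions the balancing would not produce the claimed exponent.

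Second, the relevant structural parameter is the \emph{total range} $\eta$, not merely ``entries bounded by $O(n)$.'' In the reduced instance the total range is $O(n)$, so with $\alpha=\beta=\gamma=1-\tau$ one has $\eta=\tau$; after normalizing to unit dimensions this becomes $\eta'=\tau/(1-\tau)$, which is generally $\ne 1$. Hence you cannot simply quote the $m(1,1,1,1)\le(12+\omega)/5$ special case; you must apply the general Theorem~\ref{thm:mmp-main} with this $\eta'$, and then the bound on $g(1,1,1,\theta)$. The paper expands this to a three-parameter optimization over $\tau,\theta,\delta$ and takes $\tau=\tfrac{6+\omega}{15+\omega}$, $\theta=\tfrac{3-\omega}{9}$, $\delta=\tfrac{4\omega-3}{9}$ to obtain $\tfrac{21+2\omega}{15+\omega}$; your claim that a ``routine single-variable optimization'' suffices is not accurate. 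The $O(n^{1.4805})$ bound then follows by optimizing the same three parameters numerically with the rectangular matrix multiplication bounds of \cite{LU18}.
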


There are multiple algorithms that solve Dynamic Range Mode in $\tilde{O}(n^{2/3})$ time per update and query on an array of size bounded by $n$ \cite{Chan14, zhms2018}. Recently, Sandlund and Xu \cite{SX20} improved both update and query time to $O(n^{0.6560})$ by using a so-called \textit{Min-Plus-Query-Witness} problem. During the preprocessing phase of the Min-Plus-Query-Witness problem, one is given two matrices $A, B$. During the query phase, given two indices $i, j$ and a set $S$, one is asked to compute $\argmin_{k \not \in S} \{A_{i,k}+B_{k,j}\}$, where the set $S$ can be viewed as the set of elements recently deleted in the array. In the Min-Plus-Query-Witness instances reduced from Dynamic Range Mode, the matrices $A, B$ have the monotone property, so our techniques for Monotone Min-Plus Product can also apply to these Min-Plus-Query-Witness instances, leading to a faster Dynamic Range Mode algorithm.

\begin{restatable}{theorem}{ThmDynamic}\label{thm:dynamic-range-mode}
	The Dynamic Range Mode problem can be solved deterministically in $\tO(n^{\frac{\omega+9}{\omega+15}})$ worst-case time per query with $\tO(n^{\frac{3\omega+39}{2\omega + 30}})$ space. Using the current best bound on fast rectangular matrix multiplication improves  the running time  to $O(n^{0.6524})$ and the space complexity to $O(n^{1.3262})$.
\end{restatable}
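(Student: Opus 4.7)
The plan is to combine our faster Monotone Min-Plus Product algorithm with the \emph{Min-Plus-Query-Witness} framework of Sandlund and Xu~\cite{SX20}. Their algorithm for Dynamic Range Mode reduces the problem to the following task: preprocess two matrices $A, B$ and answer queries of the form ``given $(i,j,S)$, return $\argmin_{k \notin S}\{A_{i,k}+B_{k,j}\}$'', where $S$ contains the at most $r$ indices touched by insertions or deletions since the last global rebuild of parameter $r$. As noted in the discussion preceding the theorem, the matrices $A, B$ obtained from this reduction satisfy the monotonicity hypothesis of our Monotone Min-Plus Product result, so our faster algorithm can replace the one Sandlund and Xu invoke.

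First, I would restate the Sandlund--Xu reduction so as to expose the role of the block-size $r$: global rebuilds happen every $r$ updates, each rebuild involves a witness-variant monotone min-plus product, queries scan a list of $\tO(r)$ witness candidates while skipping those in $S$, and a standard deamortization spreads the rebuild cost across subsequent updates to yield \emph{worst-case} bounds.

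Next, I would adapt our monotone algorithm from returning only the $\argmin$ to returning the top $r+1$ witnesses per entry $(i,j)$; since $|S| \le r$, at least one survives, and the smallest survivor is the desired answer. Our algorithm splits the computation into a rectangular matrix-multiplication step and a brute-force scan over a structurally small candidate interval per entry. The brute-force step already enumerates the $k$'s and can keep the top $r+1$ for free, while the matrix-multiplication step can be made witness-aware by standard sampling-based witness extraction with only polylogarithmic overhead.

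Finally, I would balance parameters. Rebuilding costs one monotone min-plus product computation, each query costs $\tO(r)$ for scanning, and the space is dominated by the stored witness lists. Setting $r$ so that the amortized per-update rebuild cost equals the per-query cost, deamortizing in the standard way, and plugging in the rectangular-matrix-multiplication exponents used in the improved version of the Monotone Min-Plus Product theorem yield the claimed $\tO(n^{(\omega+9)/(\omega+15)})$ time and $\tO(n^{(3\omega+39)/(2\omega+30)})$ space bounds; the numerical values $O(n^{0.6524})$ and $O(n^{1.3262})$ follow from the current best rectangular matrix multiplication constants. The main obstacle is witness extraction in the rectangular step: recovering the top-$r$ witnesses per cell, rather than a single $\argmin$, must be done while preserving the exponent gained from monotonicity, which requires checking that the sampling-based witness-recovery machinery is compatible with the block structure imposed by the monotone algorithm and that it does not inflate the preprocessing cost by a polynomial factor.
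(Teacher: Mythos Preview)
Your high-level plan---plug the monotone speedup into the Sandlund--Xu Min-Plus-Query-Witness framework and rebalance---matches the paper. But your concrete mechanism for answering queries differs, and the obstacle you flag at the end is exactly where your version is incomplete.

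You propose to precompute, for every cell $(i,j)$, the top $r{+}1$ witnesses, arguing that the ``brute-force scan'' part of the monotone algorithm enumerates candidates anyway and that the ``matrix-multiplication step'' can be made top-$r$-witness-aware via sampling. This does not line up with how the monotone algorithm actually works. Phase~3 only enumerates \emph{uncovered, moderately relevant} triples; if all of the best $r{+}1$ witnesses for some $(i,j)$ happen to be \emph{covered}, Phase~3 never sees them, so you cannot harvest them there ``for free''. And for Phase~2, extracting the top $r{+}1$ witnesses (not just one) from each bounded-entry min-plus product within the same time budget is not a standard primitive; your appeal to ``sampling-based witness extraction'' does not cover this, and you yourself identify it as the main obstacle without resolving it.

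The paper sidesteps this obstacle by \emph{not} precomputing top-$r$ witnesses at all. Instead it keeps the three-phase structure alive at query time: for each of the $\tO(n^\rho)$ Phase-2 rounds it builds the \emph{existing} Sandlund--Xu data structure for bounded-entry Min-Plus-Query-Witness (their Lemma~9) on $(A^r,B^r)$; a query $(i,j,S)$ is then answered by (i) querying all $\tO(n^\rho)$ of these structures (covered case), (ii) scanning the stored uncovered almost-relevant triples $T_{i,j}$ (Phase-3 case), and (iii) scanning the $|S|{+}1$ smallest entries of the downscaled multiset via a persistent BST from Phase~1, where the approximation $\tilde C'$ is set to the $(n^\lambda{+}1)$-th smallest value precisely so this last case is correct. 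Query time becomes $\tO(n^\lambda+n^{\rho+\sigma})$ rather than $\tO(r)$, and the final optimization balances six parameters ($t_1,t_2,t_3,\theta,\rho,\sigma$), not just $r$; your sketch also omits the separate handling of infrequent values (Part~1), which is what makes the inner dimension $n^{t_1}$ rather than $n$ and is in fact the space bottleneck giving $\tO(n^{2-t_1})$.
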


\subsection{Overview of the Monotone Min-Plus Product Algorithm}\label{subsec:mmp-overview}
Our improvement is achieved by extending Vassilevska Williams and Xu's~\cite{VX20} framework so that it can handle the more general monotone matrices.
The algorithm has three phases.
Say we would like to compute the min-plus product $C = A\star B$, where $B$ is a monotone matrix.

In Phase 1, we compute a matrix $\ti C$ which is close in $\ell_\infty$ norm to the desired output $C$.
This can be done by, e.g., computing the min-plus product of $\lfloor \frac{A}{W} \rfloor$ and $\lfloor \frac{B}{W} \rfloor$,
the downscaled versions of $A$ and $B$, for  some small parameter $W$. 

In Phase 2, we repeatedly sample columns of $B$, and create new matrices $A^r$ and $B^r$
(for $r=1,2,\ldots$) whose entries are simple linear combinations of $A$, $B$ and $\ti C$.
We replace large-magnitude entries of $A$ with $\infty$, so that all finite entries of $A^r$ are of
small absolute values, and that the min-plus product $A^r \star B^r$ can be computed efficiently.
The results are collected in a way such that
by the end of Phase 2,
we have found, for every pair $(i,j)$,
\[\min_k\{A_{i,k} + B_{k,j} : A^r_{i,k}\ne \infty\text{ for some }r\}.\]

In Phase 3, we deal with $(i,k)$ such that $A^r_{i,k}=\infty$ for all $r$. Such $(i,k)$ are called \emph{uncovered}.
It can be shown that the number of \emph{relevant} triples $(i,k,j)$ with $(i,k)$ uncovered is small,
and we can afford enumerating all such triples.
The enumeration is done by performing a witness-listing version of min-plus product of the downscaled versions
of $A$ and $B$.

We base on the fact
that, when a monotone matrix $B$ has very small entries, the number of \emph{changes},
i.e. pairs $(k,j)$ for which $B_{k,j} \ne B_{k,j+1}$, can be upper-bounded.
Then for each fixed $i$ we let $j$ iterate through its range, and we maintain the set $\{A_{i,k} + B_{k,j}\}$
during the iteration.
The total number of updates to the set is exactly the number of changes.
This gives us an efficient way to compute min-plus product (and its witness-listing version)
of the downscaled matrices,
and leads to an efficient running time for Phases 1 and 3.

What makes our improvement possible is that we focus directly on the structure of monotone matrices, instead of going through a lossy black-box reduction to the Min-Plus Product of bounded-difference matrices, like the previous work~\cite{VX20} did.

\section{Preliminaries}\label{sec:prelim}

\begin{definition}[Rectangular Matrix Multiplication Exponent]\label{defn:w}
	Let $\alpha$, $\beta$, $\gamma$ be non-negative real numbers.
	Define $\omega(\alpha,\beta,\gamma)$
	to be the smallest number
	such that the product of an $n^\alpha \times n^\beta$ matrix by an $n^\beta \times n^\gamma$ matrix	can be computed in
	$\tO(n^{\omega(\alpha,\beta,\gamma)})$ time.
\end{definition}

\begin{definition}\label{defn:g}
	Let $\alpha$, $\beta$, $\gamma$, $\theta$ be non-negative real numbers.
	Define $g(\alpha,\beta,\gamma,\theta)$
	to be the smallest number such that the min-plus product of an $n^\alpha \times n^\beta$
	matrix whose entries are in $\{-n^\theta, \ldots, n^\theta\}\cup\{\infty\}$
	by an arbitrary $n^\beta \times n^\gamma$ matrix
 	can be computed in $\tO(n^{g(\alpha,\beta,\gamma,\theta)})$ time.
\end{definition}

\begin{definition}[Bounded-Difference Matrix]
An $n \times m$ matrix $A$ is called a bounded-difference matrix if $|A_{i,j}-A_{i, j + 1}| \le 1$ for every $1 \le i \le n, 1 \le j < m$ and $|A_{i,j}-A_{i+1, j}| \le 1$ for every $1 \le i < n, 1 \le j \le m$.
\end{definition}

\begin{problem}[Bounded-Difference Min-Plus Product]
Given two bounded-difference integer matrices $A$ and $B$, compute $A \star B$.
\end{problem}

\begin{definition}[Monotone Matrix]\label{defn:monotone}
	An $n^\beta\times n^\gamma$ matrix $B$ is called monotone if for every $k\in [n^\beta]$, $B_{k,j}$ is non-decreasing in $j\in [n^\gamma]$.
	For a monotone matrix $B$, we define its total range as
	$\sum_{k\in [n^\beta]} (\max_{j \in [n^\gamma], B_{k, j} \ne \infty} B_{k,j} - B_{k,1} + 1).$
\end{definition}

\begin{problem}[Monotone Min-Plus Product]
    Given an $n^\alpha \times n^\beta$ matrix $A$ and an $n^\beta \times n^\gamma$ matrix $B$
    where $B$ is monotone and has total range $O(n^{\beta+\eta})$,
	where $\alpha$, $\beta$, $\gamma$, $\eta$ are non-negative real numbers, compute the min-plus product $A\star B$.
\end{problem}

\begin{definition}\label{defn:m}
	Define $m(\alpha,\beta,\gamma,\eta)$ to be the
	smallest number
	such that Monotone Min-Plus Product with parameters $\alpha$, $\beta$, $\gamma$, $\eta$, 
	can be computed in $\tO(n^{m(\alpha,\beta,\gamma,\eta)})$ time.
\end{definition}

In our applications, we only need the case $\alpha = \gamma$.
Because $m(c \alpha, c \beta, c \gamma, c \eta) = c m(\alpha, \beta, \gamma, \eta)$ for any $c\ge 0$, it suffices to consider only the case $\alpha=\gamma=1$.

\subsection{Upper bounds for \texorpdfstring{$g$}{g}}
We prove some useful upper bound for the function $g(\cdot,\cdot,\cdot,\cdot)$ introduced in Definition \ref{defn:g}.
We use the following bound from~\cite{CVX}, which is a straightforward generalization of Theorem~1.2 from~\cite{VX20} to the case of rectangular matrices.
\begin{lemma}
For any non-negative real numbers $\alpha, \beta, \gamma, \theta$, 
\begin{equation}
	g(\alpha,\beta,\gamma,\theta) \le \min_{0\le \delta\le \beta} \max\{ \omega(\alpha,\beta,\beta+\gamma-\delta) + \theta, \delta + \alpha+\gamma\}.\label{eqn:g-bound-rmm}
\end{equation}
\end{lemma}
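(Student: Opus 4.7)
The plan is to adapt the Alon--Galil--Margalit polynomial technique to this asymmetric setting (where only $A$ has entries bounded in absolute value) by pre-processing the columns of $B$ via a rank-based bucketing scheme. The parameter $\delta\in[0,\beta]$ controls a tradeoff between the size of a rectangular matrix multiplication (the $\omega(\alpha,\beta,\beta+\gamma-\delta)+\theta$ term) and a brute-force sweep over the smallest-value rows (the $\delta+\alpha+\gamma$ term).

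Concretely, for each column $j$ of $B$ I would sort its rows by value, obtaining a permutation $\sigma_j$ of $[n^\beta]$, and partition the ranks into $n^{\beta-\delta}$ consecutive buckets of size $n^\delta$. For each bucket $c$ let $b_{j,c}$ be its minimum $B$-value, attained at the anchor row $k_0^{(j,c)}:=\sigma_j((c-1)n^\delta+1)$. I then construct a widened matrix $\tilde B$ of shape $n^\beta \times n^{\beta+\gamma-\delta}$ indexed by pairs $(j,c)$, with $\tilde B_{k,(j,c)}:=B_{k,j}-b_{j,c}$ whenever $k$ lies in bucket $c$ of column $j$ and the shift is at most $2n^\theta$, and $\infty$ otherwise, so that all finite entries of $\tilde B$ lie in $\{0,\ldots,2n^\theta\}$. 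The standard AGM encoding $\hat A_{i,k}=x^{A_{i,k}+n^\theta}$, $\hat{\tilde B}_{k,(j,c)}=x^{\tilde B_{k,(j,c)}}$ (with $\infty$ sent to $0$) then reduces $A\star\tilde B$ to an integer matrix product of polynomials of degree $O(n^\theta)$, computable in $\tO(n^\theta\cdot n^{\omega(\alpha,\beta,\beta+\gamma-\delta)})$ time via fast rectangular matrix multiplication; from the result I read off $V^{\mathrm{MM}}_{i,j}:=\min_c\bigl((A\star\tilde B)_{i,(j,c)}+b_{j,c}\bigr)$. In parallel I would brute-force $V^{\mathrm{BF}}_{i,j}:=\min_{r\in[n^\delta]}A_{i,\sigma_j(r)}+B_{\sigma_j(r),j}$ over the top $n^\delta$ smallest-$B$-value rows of each column in $\tO(n^{\alpha+\gamma+\delta})$ time, and output $C_{i,j}:=\min\bigl(V^{\mathrm{MM}}_{i,j},V^{\mathrm{BF}}_{i,j}\bigr)$.

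The correctness argument splits on the true argmin $k^*$ for $(i,j)$: if $\sigma_j^{-1}(k^*)\le n^\delta$ the brute-force step returns the correct value, otherwise $k^*$ lies in some bucket $c^*\ge 2$, and whenever $B_{k^*,j}-b_{j,c^*}\le 2n^\theta$ the AGM step captures it via column $(j,c^*)$ of $\tilde B$. The last sub-case is ruled out by the swap inequality $A_{i,k_0^{(j,c^*)}}+b_{j,c^*}\le n^\theta+b_{j,c^*}<-n^\theta+B_{k^*,j}\le A_{i,k^*}+B_{k^*,j}$, valid whenever the anchor entry $A_{i,k_0^{(j,c^*)}}$ is finite. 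The main technical obstacle I anticipate is the degenerate case in which $A_{i,k_0^{(j,c^*)}}=\infty$ so that the swap argument breaks; closing it cleanly will require enriching either the construction of $\tilde B$ (by adding value-based reference points that ensure every potentially optimal $k^*$ lies in at least one covered bucket) or the brute-force set, while keeping the extra work absorbed into the two terms of the claimed maximum. Summing the running times of the two phases and minimizing over $\delta\in[0,\beta]$ then yields the stated bound.
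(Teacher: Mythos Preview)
The paper does not actually prove this lemma: it cites it from~[CVX] as a ``straightforward generalization of Theorem~1.2 from~[VX20]'' and moves on. So there is no paper proof to compare against; your proposal is a reconstruction of the cited argument, and at the high level it is the right one---rank-bucket each column of $B$, shift by the bucket minimum, run Alon--Galil--Margalit on the resulting bounded-entry rectangular instance, and brute-force one bucket per column.

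The gap you flag is real and your sketch of a fix is too vague. Concretely, if $k^*$ sits in bucket $c^*\ge 2$ with $B_{k^*,j}-b_{j,c^*}>2n^\theta$, then every row $k'$ with $B_{k',j}\le b_{j,c^*}$ must have $A_{i,k'}=\infty$ (by your swap inequality applied to any such $k'$), so both your brute force over bucket~$1$ and your AGM column $(j,c^*)$ miss $k^*$. Adding ``value-based reference points'' does not obviously work: the number of $>2n^\theta$ value-jumps in a single column of $B$ can be $\Theta(n^\beta)$, blowing up the width of $\tilde B$.

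A clean fix stays within your two budget terms. In addition to the AGM product, compute the Boolean product $M_{i,(j,c)}=\sum_{k}[A_{i,k}\ne\infty]\cdot[k\in\text{bucket }c\text{ of column }j]$; this is an $n^\alpha\times n^\beta$ by $n^\beta\times n^{\beta+\gamma-\delta}$ integer product, dominated by the AGM term. For each $(i,j)$ let $c_0$ be the smallest bucket with $M_{i,(j,c_0)}>0$, and brute-force over bucket $c_0$ rather than bucket~$1$ (still $n^\delta$ work per pair). If $k^*$ lies in bucket $c_0$ the brute force finds it; otherwise there is some $k'$ in bucket $c_0<c^*$ with $A_{i,k'}$ finite, and optimality of $k^*$ gives $B_{k^*,j}-b_{j,c^*}\le B_{k^*,j}-B_{k',j}\le A_{i,k'}-A_{i,k^*}\le 2n^\theta$, so the AGM column $(j,c^*)$ now does capture $k^*$. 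Scanning for $c_0$ costs $n^{\alpha+\gamma+\beta-\delta}$, which is at most the output size of the rectangular product and hence already absorbed.
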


We only need the following special cases of \eqref{eqn:g-bound-rmm}.
\begin{corollary}\label{cor:g}

For any non-negative real numbers $\beta$ and $\theta$, 
	\begin{align}
		g(1,\beta,1,\theta) &\le  \frac 12(2+\beta+\omega(1,\beta,1)+\theta), \label{eqn:g-1b1-bound}\\
		g(1,1,1,\theta) &\le \min_{0\le \delta\le 1} \max\{\omega(1,1,2-\delta)+\theta, 2+\delta\}.\label{eqn:g-111-bound}
	\end{align}
\end{corollary}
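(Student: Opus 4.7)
The corollary packages two convenient specializations of the general bound (\ref{eqn:g-bound-rmm}). I would handle them in turn, both following directly from that bound by a substitution plus, in one case, a standard matrix-splitting estimate.

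For (\ref{eqn:g-111-bound}), the plan is simply to set $\alpha=\beta=\gamma=1$ in (\ref{eqn:g-bound-rmm}). The minimization range $0\le \delta\le \beta$ becomes $0\le \delta \le 1$, the second term becomes $\delta+\alpha+\gamma = \delta+2$, and the first term becomes $\omega(1,1,1+1-\delta)+\theta = \omega(1,1,2-\delta)+\theta$. No optimization is performed; the bound is stated with the $\min_\delta$ left in.

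For (\ref{eqn:g-1b1-bound}), the plan has two substeps. First, I substitute $\alpha=\gamma=1$ into (\ref{eqn:g-bound-rmm}) to obtain
\[
g(1,\beta,1,\theta) \le \min_{0\le \delta\le \beta} \max\bigl\{\omega(1,\beta,\beta+1-\delta)+\theta,\ \delta+2\bigr\}.
\]
Second, I use the standard splitting inequality
\[
\omega(1,\beta,\beta+1-\delta) \le \omega(1,\beta,1) + (\beta-\delta),
\]
valid for $0\le \delta\le \beta$. This follows by cutting an $n^\beta \times n^{\beta+1-\delta}$ matrix into $n^{\beta-\delta}$ vertical blocks of shape $n^\beta\times n$ and performing $n^{\beta-\delta}$ independent $n\times n^\beta$ by $n^\beta\times n$ multiplications, each in time $\tO(n^{\omega(1,\beta,1)})$.

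With this bound in hand, the first term inside the $\max$ is at most $\omega(1,\beta,1)+\beta-\delta+\theta$. Balancing it with the second term $\delta+2$ by setting
\[
\delta = \tfrac{1}{2}\bigl(\omega(1,\beta,1)+\beta+\theta-2\bigr)
\]
makes both terms equal to $\tfrac{1}{2}(2+\beta+\omega(1,\beta,1)+\theta)$, giving the claimed inequality (\ref{eqn:g-1b1-bound}). The only thing to check is that this balancing $\delta$ lies in $[0,\beta]$ in the parameter regime of interest; in the relevant range where $\omega(1,\beta,1)+\theta$ is between $2-\beta$ and $\beta+2$ this holds, and outside that range the trivial endpoint choices $\delta=0$ or $\delta=\beta$ already meet or beat the stated bound.

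I do not anticipate any real obstacle: once the splitting inequality $\omega(1,\beta,\beta+1-\delta) \le \omega(1,\beta,1)+\beta-\delta$ is in place, the rest is algebra. The only mildly delicate point is the feasibility of the optimal $\delta$, which is dispatched by checking the two endpoint cases separately.
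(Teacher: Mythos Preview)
Your approach is essentially the same as the paper's: direct substitution for \eqref{eqn:g-111-bound}, and for \eqref{eqn:g-1b1-bound} the splitting inequality $\omega(1,\beta,\beta+1-\delta)\le \omega(1,\beta,1)+\beta-\delta$ followed by balancing the two terms. That is exactly what the paper does.

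There is one small slip in your feasibility discussion, however. You claim that when the balancing value $\delta^*=\tfrac12(\omega(1,\beta,1)+\beta+\theta-2)$ falls outside $[0,\beta]$, the endpoint choices $\delta=0$ or $\delta=\beta$ already meet the stated bound. The lower endpoint is never an issue since $\omega(1,\beta,1)\ge 2$ forces $\delta^*\ge 0$. But when $\delta^*>\beta$, i.e.\ $\omega(1,\beta,1)+\theta>2+\beta$, taking $\delta=\beta$ in \eqref{eqn:g-bound-rmm} yields $\max\{\omega(1,\beta,1)+\theta,\,2+\beta\}=\omega(1,\beta,1)+\theta$, which is strictly \emph{larger} than the claimed $\tfrac12(2+\beta+\omega(1,\beta,1)+\theta)$. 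So the endpoint choice does not suffice there. The fix, which is what the paper does, is to observe that in this regime the trivial $O(n^{2+\beta})$ algorithm already gives $g(1,\beta,1,\theta)\le 2+\beta\le \tfrac12(2+\beta+\omega(1,\beta,1)+\theta)$, so the inequality still holds. With that one-line adjustment your argument is complete and matches the paper's.
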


\begin{proof}
	Consider the term $\omega(\alpha,\beta,\beta+\gamma-\theta)$ in \eqref{eqn:g-bound-rmm}.
	By splitting matrix $B$ into $n^{\beta-\delta}$ matrices along its second dimension and computing $n^{\beta-\delta}$ independent instances of matrix multiplications, we get
	\begin{equation}
		\omega(\alpha,\beta,\beta+\gamma-\delta) \le \omega(\alpha,\beta,\gamma) + \beta-\delta.\label{eqn:omega-add}
	\end{equation}
	We plug \eqref{eqn:omega-add} into \eqref{eqn:g-bound-rmm}, and take
	$\delta = \min\{\beta, \frac 12 (\omega(\alpha,\beta,\gamma) + \beta+\theta-\alpha-\gamma)\},$
	to get
	\begin{equation}
		g(\alpha,\beta,\gamma,\theta) \le \max\{\omega(\alpha,\beta,\gamma)+\theta, \frac 12 (\alpha+\beta+\gamma+\omega(\alpha,\beta,\gamma)+\theta)\}.\label{eqn:g-rmm-proof}
	\end{equation}
	However, if
	$\omega(\alpha,\beta,\gamma)+\theta\ge \frac 12 (\alpha+\beta+\gamma+\omega(\alpha,\beta,\gamma)+\theta),$
	then both sides of \eqref{eqn:g-rmm-proof} are at least $\alpha+\beta+\gamma$,
	and we can compute the min-plus product in $O(n^{\alpha+\beta+\gamma})$ time using a trivial algorithm.
	Therefore we get \eqref{eqn:g-1b1-bound}.

	\eqref{eqn:g-111-bound} is a simple substitution $\alpha=\beta=\gamma=1$ to \eqref{eqn:g-bound-rmm}.
\end{proof}
\begin{remark}\label{rmk:rmm}
	In the rectangular case $\alpha=\gamma=1$, we use \eqref{eqn:omega-add} so that in the final expression \eqref{eqn:g-1b1-bound}
	we only need to deal with terms of the form $\omega(1, \beta, 1)$, whose value can be bounded by \cite{LU18}.
	We know of no handy upper bounds for $\omega(\alpha,\beta,\gamma)$ when all three parameters are distinct.

	In the square case $\alpha=\beta=\gamma=1$, we do not need to use the simplification \eqref{eqn:omega-add}.
	This is because the upper bound of $\omega(1, \beta, 1)$ in \cite{LU18} is by a bilinear algorithm.
	Thus by \cite{HP98}, the same upper bound works for $\omega(1, 1, \beta)$.
\end{remark}

\section{Monotone Min-Plus Product}\label{sec:monotone}
\begin{theorem}\label{thm:mmp-main}
	The min-plus product of an $n \times n^\beta$ matrix $A$ and an $n^\beta \times n$ matrix $B$ where $B$ is monotone with total range $O(n^{\beta+\eta})$
	can be deterministically computed, for any $\theta \in [0, \eta]$, in time
	$\tO(n^{\max\{1+\beta+\eta-\theta, \frac 12(2+\beta+g(1,\beta,1,\theta))\}}).$
	In other words,
	\[
		m(1, \beta, 1, \eta)\le\min_{0\le \theta\le \eta} \max\{1+\beta+\eta-\theta, \frac {1}{2}(2+\beta+g(1,\beta,1,\theta))\}.
 	\]
\end{theorem}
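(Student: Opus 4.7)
The plan is to implement the three-phase framework sketched in Section~\ref{subsec:mmp-overview}, parameterised by the downscaling factor $W:=n^\theta$ for a given $\theta\in[0,\eta]$.

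Phase~1 computes an approximation $\tilde C$ satisfying $0\le C_{i,j}-\tilde C_{i,j}\le O(W)$. I take $\tilde C:=W\cdot(\lfloor A/W\rfloor\star\lfloor B/W\rfloor)$. The downscaled matrix $\lfloor B/W\rfloor$ is still monotone, and its total range is $O(n^{\beta+\eta-\theta})$. To compute its min-plus product with $\lfloor A/W\rfloor$ I iterate, for each row $i$, the column $j$ from $1$ to $n$ while maintaining a balanced BST or priority queue over $\{\lfloor A/W\rfloor_{i,k}+\lfloor B/W\rfloor_{k,j}\}_k$; advancing $j\to j+1$ only triggers an update at those $k$ with $\lfloor B/W\rfloor_{k,j}\ne\lfloor B/W\rfloor_{k,j+1}$, and the total number of such changes over all $(k,j)$ is at most the total range of $\lfloor B/W\rfloor$. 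This yields Phase~1 cost $\tO(n\cdot n^{\beta+\eta-\theta})=\tO(n^{1+\beta+\eta-\theta})$, matching the first term in the max.

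Phase~2 handles the ``generic'' triples via sampling. I pick $R:=n^{(2+\beta-g(1,\beta,1,\theta))/2}$ anchor column indices $j_1^\star,\ldots,j_R^\star\in[n]$ and, for each $r$, define
\[
A^r_{i,k}:=A_{i,k}+B_{k,j_r^\star}-\tilde C_{i,j_r^\star},\qquad B^r_{k,j}:=B_{k,j}-B_{k,j_r^\star},
\]
replacing by $\infty$ any entry whose absolute value exceeds $W$. Because $\tilde C_{i,j_r^\star}\le C_{i,j_r^\star}\le A_{i,k}+B_{k,j_r^\star}$, every finite entry of $A^r$ lies in $[0,W]$, so $A^r$ has entries in $\{0,\ldots,n^\theta\}\cup\{\infty\}$ and the identity $A^r_{i,k}+B^r_{k,j}=A_{i,k}+B_{k,j}-\tilde C_{i,j_r^\star}$ recovers $A_{i,k}+B_{k,j}$ whenever both $A^r_{i,k}$ and $B^r_{k,j}$ are finite. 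Each $A^r\star B^r$ then takes $\tO(n^{g(1,\beta,1,\theta)})$ by Definition~\ref{defn:g}, so Phase~2 runs in $\tO(R\cdot n^{g(1,\beta,1,\theta)})=\tO(n^{(2+\beta+g(1,\beta,1,\theta))/2})$, matching the second term in the max.

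Phase~3 cleans up triples that Phase~2 misses. Call a pair $(i,k)$ \emph{uncovered} if $A^r_{i,k}=\infty$ for every $r$, i.e.\ the good set $S_{i,k}:=\{j:A_{i,k}+B_{k,j}-\tilde C_{i,j}\le W\}$ avoids every anchor. A Chernoff-plus-union-bound argument gives $|S_{i,k}|=\tO(n/R)$ with high probability for every uncovered pair. Any witness $(i,k,j)$ with $k$ attaining $C_{i,j}$ automatically lies in $S_{i,k}$ (since $\tilde C_{i,j}\ge C_{i,j}-W$), so the number of relevant uncovered triples is at most $\tO(n\cdot n^\beta\cdot n/R)=\tO(n^{(2+\beta+g(1,\beta,1,\theta))/2})$. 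I enumerate them by turning the Phase~1 change-enumeration loop into a witness-listing variant that, for each $(i,j)$, reports every $k$ whose downscaled sum $\lfloor A/W\rfloor_{i,k}+\lfloor B/W\rfloor_{k,j}$ is within an additive constant of the minimum; every true argmin $k$ necessarily appears, so a constant-time check per listed witness suffices. The main obstacles will be (i)~derandomising the anchor choice, which the theorem requires---I expect to replace the random anchors by a deterministic hitting set built directly from the monotone structure of $B$ (this is precisely where we avoid the lossy black-box reduction of~\cite{VX20})---and (ii)~arranging the witness-listing data structure so its cost amortises to output size plus $\tO(n^{1+\beta+\eta-\theta})$, rather than paying a multiplicative overhead per listed witness. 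Once both are in place, the running time is the maximum of the three phase costs, and minimising over $\theta\in[0,\eta]$ gives the stated bound on $m(1,\beta,1,\eta)$.
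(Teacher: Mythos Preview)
Your three-phase outline is exactly the paper's proof, including the downscaled-$B$ sweep for Phases~1 and~3 and the choice $R=n^{(2+\beta-g(1,\beta,1,\theta))/2}$ for the number of anchors. Two technical points need correcting, though.

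First, do \emph{not} truncate $B^r$. Definition~\ref{defn:g} only asks the first factor to have entries in $\{-n^\theta,\ldots,n^\theta\}\cup\{\infty\}$; the second factor may be arbitrary. The paper truncates $A^r$ (with threshold $3W$) and leaves $B^r_{k,j}=B_{k,j}-B_{k,j_r^\star}$ unbounded. If you also replace large $B^r$ entries by $\infty$, then for a covered pair $(i,k)$---say covered at round $r$---Phase~2 recovers $A_{i,k}+B_{k,j}$ only for those $j$ with $|B_{k,j}-B_{k,j_r^\star}|\le W$, and nothing bounds how many strongly relevant $(i,k,j)$ fall outside that window. Such triples are neither handled by Phase~2 nor caught by your Phase~3 (which only enumerates triples with $(i,k)$ uncovered), so the algorithm is incorrect as written. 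Relatedly, the threshold $W$ is too small: since $C_{i,j}-\tilde C_{i,j}$ can be as large as $2W$, a strongly relevant $j$ need not lie in your $S_{i,k}=\{j:A_{i,k}+B_{k,j}-\tilde C_{i,j}\le W\}$; the paper uses $3W$.

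Second, the derandomisation is not a hitting set built from the monotone structure of $B$. The paper instead iterates $j=1,\ldots,n$ and, reusing the Phase~1 BST sweep on the \emph{downscaled} matrices, counts for each $j$ the number of still-uncovered $(i,k)$ with $\tilde A_{i,k}+\tilde B_{k,j}\le(\tilde A\star\tilde B)_{i,j}+1$; whenever this count exceeds $n^{1+\beta-\rho}$ it takes that $j$ as the next anchor. This greedy rule guarantees at most $n^\rho$ anchors and at most $n^{2+\beta-\rho}$ remaining moderately relevant uncovered triples, which is exactly what Phase~3 needs, at no extra asymptotic cost.
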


\begin{proof}
Proof of Theorem~\ref{thm:mmp-main} follows the three-phase framework of \cite{BGSV19,VX20}, which we have briefly described in Section \ref{subsec:mmp-overview}.
Here we present the full algorithm.

\subparagraph*{Phase 1.}
Let $\theta \in [0, \eta]$ be a parameter, and let $W = \lfloor n^\theta \rfloor$.
We define two matrices $\ti A$ and $\ti B$ as $\ti A_{i,k} = \lfloor \frac {A_{i,k}}W\rfloor$
and $\ti B_{k,j} = \lfloor \frac {B_{i,k}}W\rfloor$.
We compute the min-plus product $\ti A \star \ti B$ and let $\ti C_{i,j} = (\ti A \star \ti B)_{i,j} W$.
Then $||\ti C - C||_\infty \le 2 W$.

We compute $\ti A\star \ti B$ using the following lemma, which is a simple algorithm that works fast when the total range is very small.
\begin{lemma}\label{lemma:mmp-simple}
	$m(\alpha,\beta,\gamma,\eta) \le \max\{\alpha+\gamma, \beta+\gamma, \alpha+\beta+\eta\}.$
\end{lemma}
\begin{proof}
	Say we would like to compute $C = A\star B$.
	For a fixed row $i\in [n^\alpha]$, we iterate through columns $j\in [n^\gamma]$, maintaining the multi-set $\{A_{i,k} + B_{k,j} : k\in [n^\beta]\}$.

	Each time $j$ increases, we need to update the multi-set for those $k$ where $B_{k,j}\ne B_{k,j-1}$.
	The total number of $(k,j)$ satisfying $B_{k,j}\ne B_{k,j-1}$ is $O(n^{\beta+\eta})$ by monotonicity and the bound on total range.

	For each $i\in [n^\alpha]$, we need to make $O(n^{\beta+\eta})$ updates and $O(n^{\gamma})$ queries
	for the minimum number in the multi-set.
	We can use a balanced BST to maintain the multi-set so that each update and query costs $\tO(1)$ time.
	The total running time is $\tO(n^{\max\{\alpha+\gamma, \beta+\gamma, \alpha+\beta+\eta\}})$.
\end{proof}

Total range of $\ti B$ is $O(n^{\beta+\eta-\theta})$, so running time of Phase 1 is $\tO(n^{\max\{2, 1+\beta+\eta-\theta\}})$.

\subparagraph*{Phase 2.}
In Phase 2 we compute a matrix $\hat C$ which upper bounds $C = A\star B$ and agrees with it on most entries.
Initially, let $\hat C_{i,j} \leftarrow \infty$ for all $i,j\in [n]$.

Phase 2 consists of $(10+\beta)n^\rho \log n$ \emph{rounds}, for a parameter $\rho\ge 0$ to be chosen later. In the $r$-th round, we choose $j^r\in [n]$ uniformly at random\footnote{For simplicity of presentation, we use randomness here. The derandomization is deferred to Section~\ref{sec:derand}.}
Define matrix $A^r$ and $B^r$ as
\begin{align*}
&    A^r_{i,k} = 
\left\{
\begin{array}{ll}
A_{i,k} + B_{k,j^r} - \ti C_{i,j^r} & \text{if } A_{i,k} + B_{k,j^r} - \ti C_{i,j^r} \le 3W \text{ and } A^{r^\p}_{i,k}=\infty\forall r^\p<r,\\
\infty & \text{otherwise,}
\end{array}
\right.
\\
&B^r_{k,j} = 
\left\{
\begin{array}{ll}
B_{k,j} - B_{k,j^r} & \text{if } B_{k, j^r} \ne \infty,\\
0 & \text{otherwise.}
\end{array}
\right.
\end{align*}

We compute $C^r = A^r \star B^r$ using Corollary~\ref{cor:g} because $A^r$ has bounded entries.
Finally, for all $i, j\in [n]$, we make the update $\hat C_{i,j}\leftarrow \min\{\hat C_{i,j}, C^r_{i,j} + \ti C_{i,j^r}\}$.

In other words, in the end we have $\hat C_{i,j} = \min_r \{C^r_{i,j} + \ti C_{i,j^r}\}$.
If $A^r_{i,k}\ne \infty$, then for all $j$, we have
$\hat C_{i,j} \le C^r_{i,j} + \ti C_{i,j^r} \le A^r_{i,k} + B^r_{i,k} + \ti C_{i,j^r} = A_{i,k} + B_{k,j}.$
Thus in this case we have effectively updated $\hat C_{i,j}$'s using $A_{i,k} + B_{k,j}$ for all $j$.

Following \cite{VX20}, we make the following definitions: a triple $(i,k,j)\in [n]\times [n^\beta] \times [n]$ is  \emph{strongly relevant}, if $A_{i,k} + B_{k,j} = C_{i,j}$;
	\emph{weakly relevant}, if $A_{i,k} + B_{k,j} - \ti C_{i,j} \le 3W$;
	\emph{covered}, if $A^r_{i,k} \ne \infty$ for some $r$;
	\emph{uncovered}, if it is not covered.
We use the following lemma from \cite{VX20}.
\begin{lemma}[{\cite[Lemma 4.3]{VX20}}]\label{lemma:few-weak}
	With probability $1-n^{-9}$, the number of triples that are weakly relevant and uncovered is at most $n^{2+\beta-\rho}$.
\end{lemma}
\begin{proof}
    Fix some pair of $(i, k)$. If the number of $j$ such that $(i, k, j)$ is weakly relevant is at least $n^{1-\rho}$, then with probability at least $1-(1-n^{-\rho})^{(10+\beta)n^\rho \log n} \ge 1-n^{-10-\beta}$, we will sample a $j^r$ such that $(i, k, j^r)$ is weakly relevant. If so, $A^r_{i, k} \ne \infty$ for some $r$ and thus $(i, k, j)$ will be covered for all $j$. 
Therefore, with probability at least $1-n^{-9}$, all $(i, k)$ that are in at least $n^{1-\rho}$ weakly relevant triples will be covered. The number of remaining weakly relevant triples is at most $n^{2+\beta-\rho}$.
\end{proof}

Each round costs $\tO(n^{g(1,\beta,1,\theta)})$ time.
Phase 2 costs $\tO(n^{\rho + g(1,\beta,1,\theta)})$ time in total.

\subparagraph*{Phase 3.}
We define a triple $(i,k,j)$ to be \emph{moderately relevant} if $\ti A_{i,k} + \ti B_{k,j} \le (\ti A \star \ti B)_{i,j}+1$.
In Phase 3, we enumerate over moderately relevant and uncovered triples to complete matrix $C$.

\begin{lemma}\label{lemma:strong-moderately}
    Every strongly relevant triple is also moderately relevant.
\end{lemma}
\begin{proof}
	Suppose $(i,k,j)$ is strongly relevant.
	If $\ti A_{i,k^\p} + \ti B_{k^\p,j} = (\ti A\star \ti B)_{i,j}$ for some $k'$, then
	because $A_{i,k}+B_{k,j} \le A_{i,k^\p} + B_{k^\p,j}$,
	we have
	\[\ti A_{i,k} + \ti B_{k,j} \le \ti A_{i,k^\p} + \ti B_{k^\p,j} + 1 = (\ti A \star \ti B)_{i,j}+1.\]
	Hence $(i,k,j)$ is moderately relevant.
\end{proof}
\begin{lemma}\label{lemma:moderately-weak}
    Every moderately relevant triple is also weakly relevant.
\end{lemma}
\begin{proof}
	Suppose $(i,k,j)$ is moderately relevant. Then
	\begin{align*}
		A_{i,k}+B_{k,j}-\ti C_{i,j} &\le (\ti A_{i,k}+1) W + (\ti B_{k,j}+1) W - (\ti A \star \ti B)_{i,j}W\\
		&\le (\ti A_{i,k} + \ti B_{k,j} - (\ti A \star \ti B)_{i,j}) W + 2W
		\le 3W.
	\end{align*}
	Hence $(i,k,j)$ is weakly relevant.
\end{proof}

By Lemma~\ref{lemma:strong-moderately}, it suffices to enumerate over moderately relevant and uncovered triples
to recover all of $C$.
By Lemmas~\ref{lemma:few-weak} and~\ref{lemma:moderately-weak},
the number of moderately relevant and uncovered triples is at most $O(n^{2+\beta-\rho})$, with high probability.
\begin{lemma}\label{lemma:mmp-phase-3}
	With high probability, it takes time $\tO(n^{\max\{2, 1+\beta+\eta-\theta, 2+\beta-\rho\}})$ to enumerate all moderately relevant and uncovered triples.
\end{lemma}
\begin{proof}
	Define matrix $\check A$ as
	$\check A_{i,k} = \ti A_{i,k}$ if $(i,k)$ is uncovered; and $\check A_{i,k} = \infty$ otherwise.

	We proceed on computing $\check A \star \ti B$ in a way similar to Lemma~\ref{lemma:mmp-simple} from Phase 1.
	For each row $i$, we maintain the set $\{(\check A_{i,k} + \ti B_{k,j}, k) : k\in [n^\beta]\}$ as $j$ iterates over $[n]$.
	Each time $j$ increases, we need to update the multi-set for those $k$ where $\tilde{B}_{k,j}\ne \tilde{B}_{k,j-1}$.
	The total number of $(k,j)$ satisfying $\tilde{B}_{k,j}\ne \tilde{B}_{k,j-1}$ is $O(n^{\beta+\eta-\theta})$.
	
	For each $(i,j)$, we enumerate the elements in the multi-set in the increasing order,
	and stop as soon as we observe a $k$ where $\check A_{i,k} + \ti B_{k,j} > (\ti A \star \ti B)_{i,j}+1$. Therefore we enumerate exactly the moderately relevant uncovered triples.
	The running time is the running time from Lemma~\ref{lemma:mmp-simple}, plus the
	number of triples emitted, which, with high probability, is at most $O(n^{2+\beta-\rho})$, by Lemma~\ref{lemma:few-weak}.
\end{proof}

Phase 3 runs in time $\tO(n^{\max\{2, 1+\beta+\eta-\theta, 2+\beta-\rho\}})$.

\subparagraph*{Summary.}
Overall running time of our algorithm is
$\tO(n^{\max\{2, 1+\beta+\eta-\theta, \rho + g(1,\beta,1,\theta), 2+\beta-\rho\}})$.
Note that $g(1,\beta,1,\theta) \le 2+\beta$.
So we can take $\rho = \frac 12 (2+\beta - g(1, \beta, 1, \theta))$.
Also note that $\rho + g(1,\beta,1,\theta) \ge 2$, so the $2$ in the $\max$ expression can be ignored.
In the end we get
$\tO(n^{\max\{1+\beta+\eta-\theta, \frac 12 (2+\beta + g(1, \beta, 1, \theta))\}})$
as claimed.
\end{proof}

As a benchmark, let us consider the case $\alpha=\beta=\gamma=\eta=1$.
\ThmMonotone*
\begin{proof}
	In this case, Theorem \ref{thm:mmp-main} simplifies (via \eqref{eqn:g-111-bound}) to
	\begin{align}
		m(1,1,1,1) &\le \min_{0\le \theta \le 1} \max\{3-\theta, \frac 12 (3+g(1,1,1,\theta))\} \nonumber \\
		&\le \min_{0\le \theta \le 1} \max\{3-\theta, \frac 12 (3+\min_{0\le \delta\le 1} \max \{\omega(1,1,2-\delta)+\theta, 2+\delta\})\}. \label{eqn:m111}
	\end{align}
    
    Without using rectangular matrix multiplication, we can use $\omega(1,1,2-\delta) \le 1-\delta + \omega$ and take $\theta = \frac{3-\omega}{5}$ and $\delta = \frac{2\omega-1}{5}$, so \eqref{eqn:m111} takes value $\frac{12+\omega}5$.
    
    Using the rectangular matrix multiplication upper bounds in \cite{LU18} (see also Remark \ref{rmk:rmm}), we find that when
	$\theta = 0.1348$, $\delta = 0.7305$,
	expression \eqref{eqn:m111} takes value $\le 2.8653$.
\end{proof}

\subsection{Derandomization for Theorem \ref{thm:mmp-main}}
\label{sec:derand}
In this section we describe how to derandomize Phase 2 in the algorithm for Theorem \ref{thm:mmp-main},
achieving a deterministic algorithm with the same running time.

In the derandomized algorithm, we still choose $j^r$ and compute the min-plus product $A^r \star B^r$,
where $A^r$ and $B^r$ are defined in the same way as in the original Phase 2.
Instead of choosing $j^r$ randomly, we pick $j^r$ with a large enough number of pairs $(i,k)$
such that $(i,k,j^r)$ is uncovered and moderately relevant.

More specifically, we choose a number $\rho \ge 0$ and iterate through $j\in [n]$.
In each iteration $j$, we do the following.

\begin{itemize}
	\item Maintain for each $i\in [n]$ the set
	\[\{((\ti A_{i,k} + \ti B_{k,j}, k) : \text{$(i,k)$ is uncovered yet}\}\]
	using a balanced BST that supports querying the number of elements smaller than a given value in $\tO(1)$ time per query.
	\item Count the total number of $(i,k)$ such that
	$(i,k,j)$ is moderately relevant and uncovered yet by making one query for each $i$.
	\item If this number is smaller than $n^{1+\beta-\rho}$, do nothing and go to the next iteration.
	\item Otherwise, let this $j$ be the new $j^r$.
	Compute $A^r \star B^r$ and update $\hat C$ accordingly.
	\item Remove $(\ti A_{i,k} + \ti A_{k,j}, k)$
	from $i$-th set for all $(i,k)$ covered in this iteration.
\end{itemize}

After this process, for each $j$, the number of moderately relevant and uncovered $(i,k,j)$ is at most $n^{1+\beta-\rho}$,
because otherwise this $j$ would have been chosen as a $j^r$ and all those $(i,k)$ would have been covered.
Therefore the total number of moderately and uncovered triples is at most $n^{2+\beta-\rho}$,
and Phase 3 can go through.

Because for each $r$, we cover at least $n^{1+\beta-\rho}$ new pairs $(i,k)$,
the total number of different $r$'s is at most $n^\rho$.
The running time for each $r$ is $\tO(n^{g(1,\beta,1,\theta)})$ and the running time for maintaining the balanced BSTs is $\tO(n^{\max\{2, 1+\beta+\eta-\theta\}})$.
Thus, the total running time for derandomized Phase 2 is $\tO(n^{\max\{\rho + g(1,\beta,1,\theta), 1+\beta+\eta-\theta\}}),$
and the asymptotic running time for the whole algorithm stays the same.

\section{Single Source Replacement Paths}

We show our algorithm and lower bound for SSRP in this section.  We use $d_G(u, v)$ to denote the length of a shortest path from $u$ to $v$ in a graph $G$, and we use $d_G(u, v, e)$ as a shorthand for $d_{G\setminus\{e\}}(u, v)$. When it is clear from the context, we sometimes omit $G$.

\subsection{Algorithm}
In this section we present our improved algorithm for SSRP, proving Theorem~\ref{thm:SSRP_upper_bound}.

\ThmSSRPUpperBound*

To this end we improve the bottleneck in Grandoni-Vassilevska Williams algorithm~\cite{GVW2019}, hence let us begin with a high level overview of that algorithm. This is however just to give a context and intuition, and our formal proof of Theorem~\ref{thm:SSRP_upper_bound} follows from a black-box\footnote{See end of the proof of Lemma~\ref{lem:subpath} for a discussion why the $\tO(Mn^\omega)$ component from Lemma~\ref{lem:SSRP_to_subpath} can be omitted.} application of Lemmas~\ref{lem:SSRP_to_subpath} and~\ref{lem:subpath}.

\subparagraph*{Algorithm overview and the subpath problem.}
The algorithm first computes a shortest paths tree (from the source vertex $s$), and splits it into a subpolynomial number of subtrees. By using balanced separators, the subtrees can be roughly of the same size. Then, for each such subtree $T$, values $d_G(s, v, e)$ such that both vertex $v$ and edge $e$ belong to $T$ are deferred to a recursive call on a graph obtained from $G$ by carefully compressing its parts outside $T$. The only remaining interesting values $d_G(s, v, e)$ (i.e. such that they might be different from $d_G(s, v)$) are such that vertex $v$ belongs to subtree $T$ and edge $e$ lies on the path from $s$ to the root of $T$ in the shortest paths tree. The problem of computing those remaining values is called \emph{subpath problem}, which we now define formally.

\begin{definition}[Subpath problem]
Given an $n$-vertex directed graph $G$ with edge weights in $\{-M, \ldots, M\}$, a source vertex $s$, and a tree $T$ which is a subtree of a shortest paths tree from $s$, compute $d_G(s, v, e)$ for every $v \in T$ and every $e$ on the path from $s$ to the root $t$ of $T$ in the shortest paths tree.
\end{definition}

Using the ideas outlined above, Grandoni and Vassilevska Williams formally reduce SSRP to the subpath problem.

\begin{lemma}[Lemma~5.1 in~\cite{GVW2019}]
\label{lem:SSRP_to_subpath}
Given an algorithm that solves the subpath problem in time $\tO(M^\alpha n^\beta)$, with high probability, for constants $\beta \ge \alpha + 1 \ge 1$, there is an algorithm that solves SSRP in time $\tO(Mn^\omega + M^\alpha n^\beta)$, with high probability.
\end{lemma}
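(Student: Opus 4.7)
The plan is to reduce SSRP to a collection of subpath-problem instances via a balanced recursive decomposition of the shortest paths tree, together with graph compression so that each recursive instance is on a graph whose size matches the current subtree.

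\textbf{Step 1 (shortest paths tree and trivial edges).} First compute a shortest paths tree $T_s$ rooted at $s$ using known single-source shortest-path algorithms for graphs with edge weights in $\{-M,\dots,M\}$; this precomputation is absorbed into the $\tO(Mn^\omega)$ term. For any $e\notin T_s$ we have $d_G(s,v,e)=d_G(s,v)$, so it suffices to handle the $n-1$ edges of $T_s$.

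\textbf{Step 2 (recursive decomposition and subpath calls).} Decompose $T_s$ recursively in centroid style: at each step, for the current subtree pick a vertex $c$ such that removing $c$ leaves components of size at most half of the subtree, and invoke the subpath algorithm on $(G',s,T_c)$, where $T_c$ is the subtree rooted at $c$ and $G'$ is a compressed graph of size $O(|T_c|)$ (described in Step~3). This call handles all pairs $(v,e)$ with $v\in T_c$ and $e$ on the $s\to c$ path, and a standard covering argument shows that every relevant pair $(v,e)$ is handled by at least one call along the recursion. The recursion has $O(\log n)$ levels, and at each level the processed subtrees are vertex-disjoint with sizes summing to $O(n)$. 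Combining $\beta\ge 1$ with the elementary inequality $\sum_i x_i^\beta\le(\sum_i x_i)^\beta$ for $x_i\ge 0$ bounds the total subpath cost at each level by $\tO(M^\alpha n^\beta)$, hence $\tO(M^\alpha n^\beta)$ across the whole recursion.

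\textbf{Step 3 (graph compression) and main obstacle.} The compressed graph $G'$ is built by keeping edges of $G$ inside $V(T_c)\cup\{s\}$ and adding virtual edges whose weights equal shortest-path distances through $G\setminus V(T_c)$ between boundary vertices of $T_c$, so that $d_{G'}(s,v,e)=d_G(s,v,e)$ for every $(v,e)$ this call handles. The main obstacle is setting up this compression correctly in the presence of negative edge weights: I must ensure exact preservation of $d_G(s,v,e)$ even when the shortest $s\to v$ path in $G\setminus e$ leaves and re-enters $V(T_c)$, and I must arrange the computation of all virtual edge weights across the recursion to fit inside the $\tO(Mn^\omega)$ budget (likely via a single Zwick-style APSP precomputation shared across compressed subproblems). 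The case analysis for negative-weight paths interacting with the removed edge $e\in T_c$, together with the amortization of the compression precomputations, is where I expect the bulk of the technical work.
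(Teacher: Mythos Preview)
The paper does not actually prove this lemma; it is quoted as Lemma~5.1 of~\cite{GVW2019} and used as a black box. The paper gives only the brief informal overview preceding the statement. Your outline matches that overview well: balanced recursive decomposition of the shortest paths tree, one subpath call per piece, and graph compression so the recursive instances shrink. Whether one splits into a subpolynomial number of roughly equal subtrees (as the paper sketches) or proceeds centroid-style is immaterial to the cost analysis once $\beta\ge 1$.

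There is one concrete problem in Step~3. You propose to fund the virtual-edge computations by ``a single Zwick-style APSP precomputation'' inside the $\tO(Mn^\omega)$ budget. Zwick's directed APSP with weights in $\{-M,\dots,M\}$ runs in $\tO(M^{1/(4-\omega)}n^{2+1/(4-\omega)})$ time, which already exceeds $\tO(Mn^\omega)$ at $M=1$ (roughly $n^{2.53}$ versus $n^{2.37}$ with the current $\omega$). So full APSP cannot be the mechanism; the compression in~\cite{GVW2019} must obtain the required distances more cheaply (essentially only distances touching the boundary of each subtree are needed, and these can be amortized without an all-pairs computation). Your instinct that compression is the main obstacle is correct, but the specific tool you name is too expensive.

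A smaller point: in Step~3 you write ``the removed edge $e\in T_c$'', but in the subpath instance $e$ lies on the $s$-to-$c$ path and hence \emph{outside} $T_c$. This looks like a slip, but it touches a real subtlety: the compressed graph must encode both the (potentially long) $s$-to-$c$ path and the detours through $G\setminus V(T_c)$ while having only $O(|T_c|)$ vertices. In the recursive formulation this works because that path is measured inside the already-compressed graph of the current level, not inside the original $G$; your flat, non-recursive description of the decomposition obscures this and would need to be rewritten as ``recurse SSRP on the compressed graph'' to make the sizes line up.
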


\subparagraph*{Jumping paths and departing paths.}
We proceed to show how to solve the subpath problem.
Let $P=(s=s_1 \to s_2 \to \cdots \to s_{|P|}=t)$ be the $s$-$t$ path in the shortest paths tree. A replacement path witnessing $d_G(s,v,e)$ has to depart from $P$ somewhere before $e$ and then can either (1) join $P$ back somewhere after $e$, and thus reach $v$ through $t$, or (2) never use any other edge of $P$ after departing. Paths of the first type are called \emph{jumping paths}, and of the second type -- \emph{departing paths}. Grandoni and Vassilevska Williams~\cite{GVW2019} use the fact that jumping paths can be found by solving the \emph{$s$-$t$ replacement paths} problem, i.e.~computing all $d_G(s,t,e)
$'s for fixed $t$, which can be computed in $\tO(Mn^\omega)$ time (see Lemma~\ref{lem:rp}). We just follow their approach in that regard.

\begin{lemma}[Theorem~1.1 in~\cite{GVW2019}]\label{lem:rp}
There is a randomized algorithm that solves $s$-$t$ replacement paths problem in $\tO(Mn^\omega)$ time, with high probability.
\end{lemma}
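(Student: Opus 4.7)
The plan follows the standard strategy for $s$-$t$ replacement paths with small integer edge weights, combining the Alon-Galil-Margalit bounded-entry min-plus product (which the excerpt already uses repeatedly) with Zwick-style bridging sets.

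First, I would compute a shortest $s$-$t$ path $P = (s = v_0, v_1, \ldots, v_k = t)$ via Bellman-Ford, which handles the possibly negative weights and also yields $d_G(s, v)$ and $d_G(v, t)$ for every $v$. The task is then to compute $d_G(s, t, e_i)$ for each $e_i = (v_i, v_{i+1})$ on $P$. Any replacement path avoiding $e_i$ decomposes canonically into a prefix along $P$ from $s$ to some $v_j$ with $j \le i$, a \emph{detour} from $v_j$ to some $v_{j'}$ with $j' \ge i+1$ that avoids all edges of $P$ between $v_j$ and $v_{j'}$, and a suffix along $P$ from $v_{j'}$ to $t$. Since $P$ is shortest, the prefix and suffix contribute $d_G(s, v_j)$ and $d_G(v_{j'}, t)$, both already known; what remains is to compute the best avoiding detour for each relevant $(j, j')$.

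Next, I would stratify detours by their number of hops into $O(\log n)$ dyadic scales. At scale $h$, sample a bridging set $B_h \subseteq V$ of size $\tO(n/h)$; a standard Chernoff and union bound argument shows that with high probability every detour of $\Theta(h)$ hops passes through some vertex of $B_h$. For each scale, compute $h$-hop-bounded shortest path lengths between all vertices and $B_h$ by $O(\log h)$ rounds of min-plus squaring. The intermediate entries at scale $h$ are bounded in absolute value by $Mh$, so each round is an $n \times n$ by $n \times \tO(n/h)$ min-plus product with entries in $\{-Mh, \ldots, Mh\} \cup \{\infty\}$, which by Alon-Galil-Margalit runs in $\tO(Mh \cdot n^{\omega(1,1,\log_n(n/h))})$ time. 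Choosing the dyadic scales geometrically makes the sum collapse to $\tO(Mn^\omega)$.

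The main obstacle I expect is handling the edge-avoidance constraint uniformly across all $e_i$: the forbidden subset of $P$'s edges depends on $e_i$, so a naive implementation would redo work per edge. I would address this by processing the indices $i$ in order along $P$ and, at each scale $h$, maintaining for each bridge $b \in B_h$ the best detour distance compatible with the current forbidden window; as $i$ advances by one, only a single edge of $P$ enters the window, and within each dyadic block of indices the maintained bridge distances can be refreshed by a single additional min-plus product. Combining the updated bridge distances with the prefix and suffix contributions $d_G(s, v_j)$ and $d_G(v_{j'}, t)$ (a constant-depth tropical computation of the form $\min_{j, j'}$) then produces $d_G(s, t, e_i)$ for every $e_i$ on $P$ within the aggregate $\tO(Mn^\omega)$ budget, giving the claimed running time with high probability.
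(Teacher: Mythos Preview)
This lemma is not proved in the paper; it is quoted from \cite{GVW2019} and used as a black box, so there is no in-paper argument to compare your attempt against. On its own merits, your sketch has the right high-level shape but two genuine gaps.

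First, the running-time claim for the bridging step does not hold. Computing $h$-hop distances between all vertices and $B_h$ by ``$O(\log h)$ rounds of min-plus squaring'' where ``each round is an $n\times n$ by $n\times\tO(n/h)$ product'' is internally inconsistent: a squaring step $D\star D$ needs both factors $n\times n$, and restricting the right factor's columns to $B_h$ still requires the full $n\times n$ left factor, which you have not produced. Even granting your stated per-round cost $\tO(Mh\cdot n^{\omega(1,1,\log_n(n/h))})$, the sum over dyadic scales does \emph{not} collapse to $\tO(Mn^\omega)$: at $h=n$ alone the exponent is $1+\omega(1,1,0)=3>\omega$, so you already pay $\tO(Mn^3)$. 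Achieving $\tO(Mn^\omega)$ here requires the more delicate Yuster--Zwick construction (cf.~Lemma~\ref{lem:oracle}), not plain repeated squaring.

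Second, the ``main obstacle'' you identify is an artifact of your decomposition. If instead you let $v_a$ be the last vertex of $P[s..v_i]$ that the replacement path visits before it first touches $P[v_{i+1}..t]$, and $v_b$ be that first touched vertex, then the $v_a$--$v_b$ segment visits no internal $P$-vertex and hence uses no $P$-edge at all. Thus the detour can always be taken in the single fixed graph $\widetilde G = G\setminus E(P)$; there is no varying forbidden window, and the vague sweep-and-refresh procedure is both unnecessary and unjustified. (A minor further issue: Bellman--Ford is $O(n^3)$ on dense graphs, so the very first step already exceeds the $\tO(Mn^\omega)$ budget; you would need a matrix-based SSSP instead.)
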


\subparagraph*{Improved algorithm for departing paths.}
Let $\widetilde{G}$ denote the graph obtained from $G$ by removing all edges on path $P$. Note that the length of a shortest departing replacement path to $v$ avoiding $e=(s_i,s_{i+1})$ equals to $\min_{j \le i} d_G(s,s_j) + d_{\widetilde{G}}(s_j,v)$. Grandoni and Vassilevska Williams~\cite{GVW2019} simply feed $\widetilde{G}$ to Zwick's APSP algorithm~\cite{Zwick02}, running in time $\tO(M^{\frac{1}{4-\omega}}n^{2+\frac{1}{4-\omega}})$, to compute all $d_{\widetilde{G}}(s_j,v)$'s. We take a different approach and employ our truly subcubic algorithm for Monotone Min-Plus Product. We remark that any truly subcubic algorithm would yield an improvement.

Let $\zeta \in [0,1]$ be a parameter to be determined later. We say that a departing replacement path is \emph{hop-long} if it visits at least $n^\zeta$ nodes after departing $P$, otherwise it is \emph{hop-short}. We handle the two types of paths separately.

\subparagraph*{Hop-short paths.} To find hop-short paths we use a modification of Zwick's APSP algorithm~\cite{Zwick02}, already described in~\cite{GVW2019}. Zwick's algorithm consists of $O(\log n)$ iterations, and in the $i$-th iteration it computes the shortest paths which use at most $(3/2)^i$ nodes. By running only first few iterations we can compute all hop-short shortest paths in time $\tO(Mn^{\zeta+\omega(1,1-\zeta,1)})$,
which is faster than it would take to compute all shortest paths (given that $\zeta$ is small enough). For a formal proof of this statement we refer to~\cite{GVW2019}.

\begin{lemma}[Corollary~3.1 in~\cite{GVW2019}]\label{lem:hopshort}
The distances between all pairs of nodes that have shortest paths on at most $n^\zeta$ nodes can be computed in time $\tO(Mn^{\zeta+\omega(1,1-\zeta,1)})$, with high probability.
\end{lemma}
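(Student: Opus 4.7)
The plan is to truncate Zwick's APSP algorithm so that it only performs enough iterations to cover paths with at most $n^\zeta$ hops, and then bound the cost of the final (dominating) iteration. Zwick's algorithm maintains, in the $i$-th round, a matrix $D^{(i)}$ storing the shortest-path distances that use at most $k_i = (3/2)^i$ vertices, and obtains $D^{(i+1)}$ from $D^{(i)}$ by one min-plus product $D^{(i)} \star D^{(i)}$. I would run only $O(\log n^\zeta) = O(\zeta \log n)$ such rounds, stopping as soon as $k_i \ge n^\zeta$. This gives correct distances for every pair whose shortest path uses at most $n^\zeta$ vertices.

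For each round I would apply Zwick's bridging-set trick: sample a random subset $S_i \subseteq [n]$ of size $\tilde{O}(n/k_i)$; with high probability, every pair $(u,v)$ whose optimal $\le k_{i+1}$-hop path is not directly among $D^{(i)}$'s entries has some witness in $S_i$, so we only need the rectangular min-plus product of the $n \times |S_i|$ and $|S_i| \times n$ submatrices of $D^{(i)}$ indexed by $S_i$. After at most $n^\zeta$ hops the finite entries of $D^{(i)}$ lie in $\{-Mn^\zeta,\ldots,Mn^\zeta\}$, so the Alon--Galil--Margalit algorithm computes each such rectangular min-plus product in $\tO(M k_i \cdot n^{\omega(1, 1-\log_n k_i, 1)})$ time.

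Summing over the $O(\zeta \log n)$ rounds, the cost is dominated by the largest $k_i$, namely $k_i = \Theta(n^\zeta)$ (smaller $k_i$ give smaller weight bounds \emph{and} smaller bridging sets, hence strictly smaller cost since $\omega(1,\beta,1)$ is non-decreasing in $\beta$). Thus the total runtime is $\tO(M n^{\zeta + \omega(1,1-\zeta,1)})$, as claimed. The success probability is boosted to $1 - n^{-c}$ by the standard $O(\log n)$-fold repetition of bridging-set sampling in each round.

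The only nontrivial points are (i) verifying that the bridging-set sampling, stated by Zwick for full APSP, still works when one only wants correctness for the $(\le n^\zeta)$-hop regime -- this follows verbatim from his analysis since the argument is per pair $(u,v)$ and per hop bound -- and (ii) checking that the geometric schedule $k_i = (3/2)^i$ makes the last iteration dominate the sum; this is immediate from monotonicity of $\omega(1,\beta,1)$ in $\beta$ together with $k_i \le n^\zeta$. Since this lemma is essentially a restatement of Corollary~3.1 of~\cite{GVW2019}, I would primarily cite their analysis and only sketch these verifications.
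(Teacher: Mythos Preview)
Your approach matches the paper's: the paper does not give its own proof of this lemma but simply cites \cite{GVW2019} and sketches exactly the truncation of Zwick's algorithm that you describe. Your write-up is essentially correct.

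One slip: you write that ``smaller $k_i$ give \ldots\ smaller bridging sets''. The bridging set has size $\tO(n/k_i)$, so smaller $k_i$ gives a \emph{larger} inner dimension, hence a \emph{larger} $\omega(1,1-\log_n k_i,1)$. There is a genuine trade-off, not a trivially monotone situation. The last iteration still dominates, but the correct justification is that $\omega(1,\beta,1)-\omega(1,\beta',1)\le \beta-\beta'$ for $\beta\ge\beta'$ (split into $n^{\beta-\beta'}$ products), so $x+\omega(1,1-x,1)$ is non-decreasing in $x=\log_n k_i$; hence each round's cost $M k_i\, n^{\omega(1,1-\log_n k_i,1)}$ is bounded by the final round's cost $M n^{\zeta}\, n^{\omega(1,1-\zeta,1)}$. (And since there are only $O(\log n)$ rounds, this already suffices for the $\tO$ bound.)
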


\subparagraph*{Hop-long paths.} To find hop-long paths, first we sample (with replacement) $c \cdot n^{1-\zeta} \ln n$ nodes, for a large enough constant $c$. Let $B \subseteq V$ denote the set of sampled nodes. For the sake of analysis let us fix a set $\mathcal{S}$ of shortest hop-long departing replacement for all nodes $v \in T$ and all edges $e \in P$. 
When there is more than one such path of the smallest length for a given pair $(v,e)$, we choose an arbitrary one. Note that for paths in $\mathcal{S}$, we only include the portions after they depart $P$ so that they only contain edges in $\widetilde{G}$. Since the definition of hop-long paths only concerns the length of the part of a path after it departs $P$, all paths in $\mathcal{S}$ have length at least $n^{\zeta}$. By a standard proof, with high probability, every path in $\mathcal{S}$ contains a node from $B$ which lies in that path's middle third part.



Then we construct Yuster-Zwick distance oracle for graph $\widetilde{G}$ (see Lemma~\ref{lem:oracle} below) and use it compute all $\tO(n \cdot n^{1-\zeta})$ shortest paths to and from $B$ using at least $(1/3) \cdot n^\zeta$ nodes. In total it takes time 
$\tO(Mn^\omega + n^{3-2\zeta}).$

\begin{lemma}[implicit in~\cite{YZ05}, Lemma~2.3 in~\cite{GVW2019}]\label{lem:oracle}
Given an $n$-vertex directed graph $G$, with edge weights in $\{-M, \ldots, M\}$, one can compute in $\tO(Mn^\omega)$ time an $n \times n$ matrix $D$, so that the $(i,j)$-th entry of the min-plus product $D \star D$ is the distance from node $i$ to $j$ in $G$. Furthermore, by the properties of $D$, the length of a shortest $i \leadsto j$ path containing at least $n^\zeta$ nodes can be computed in $\tO(n^{1-\zeta})$ extra time, with high probability.
\end{lemma}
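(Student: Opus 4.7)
\medskip
\noindent\textbf{Proposal.}
The plan is to follow the Yuster--Zwick hierarchical sampling scheme combined with the Alon--Galil--Margalit $\tO(Mn^\omega)$ algorithm for min-plus products of matrices with entries in $\{-M,\ldots,M\}\cup\{\infty\}$. I would iterate Zwick's hop-doubling framework: at level $k$ with hop bound $h_k=(3/2)^k$, I maintain a distance matrix $D_k$ that is exact for pairs whose shortest path uses at most $h_k$ vertices, and I advance to $D_{k+1}$ by sampling a bridge set $B_k\subseteq V$ of size $\tO(n/h_k)$ and setting
\[
  D_{k+1}[u,v] \;=\; \min\!\bigl\{\,D_k[u,v],\ \min_{b\in B_k} D_k[u,b]+D_k[b,v]\,\bigr\}.
\]
A standard Chernoff argument shows that with high probability $B_k$ hits the middle third of every shortest path with more than $h_k$ vertices, which is what makes the level-by-level recurrence correct. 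Because the entries of the level-$k$ distance matrix are bounded by $Mh_k$ in absolute value, each level's min-plus product can be carried out by Alon--Galil--Margalit in $\tO(Mh_k\cdot n^{\omega})$ time, and aggregating over the $O(\log n)$ levels gives a telescoping total of $\tO(Mn^\omega)$.

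Next I would construct the output matrix $D$ by stopping the recurrence once $h_k\ge n^\zeta/3$, so that the final bridge set $B$ has size $\tO(n^{1-\zeta})$. Define $D[u,v]$ to equal the true distance $d_G(u,v)$ whenever $v$ belongs to $B$ or whenever $v$ is reachable from $u$ by a shortest path on at most $n^{\zeta}/3$ vertices (both quantities are byproducts of the above construction), and set the remaining entries to $\infty$. Then for any pair $(i,j)$, I argue that $(D\star D)[i,j]=d_G(i,j)$ by case analysis on the hop length of a shortest $i\leadsto j$ path: if it is short, the pair $(i,j)$ itself is already a ``close'' entry of $D$ and the minimum is attained at $k=j$; if it is long, the middle-third argument guarantees some bridge vertex $b\in B$ lies on the path, and the $k=b$ term evaluates to $D[i,b]+D[b,j]=d_G(i,b)+d_G(b,j)=d_G(i,j)$. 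For the query procedure, observe that the only finite entries of the $i$-th row of $D$ relevant to a long-hop query are the $\tO(n^{1-\zeta})$ entries $D[i,b]$ with $b\in B$: enumerating these and reading off the corresponding $D[b,j]$ yields a single entry of $D\star D$ in $\tO(n^{1-\zeta})$ time.

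The main obstacle I expect is balancing the two roles that $D$ must play simultaneously: it has to be ``dense enough'' for $D\star D$ to equal the true distance matrix on \emph{every} pair (including those whose shortest paths have far fewer than $n^\zeta$ hops), yet ``sparse enough'' that a single long-hop entry of $D\star D$ can be read off by looking only at the $B$-columns of the $i$-th row. The key to reconciling these is that the close-range entries are only needed as ``terminal'' contributions (i.e.\ matching $u=i$ or $v=j$) and never need to be combined with another close-range entry in the middle, so the bridge structure carries the combinatorial weight of the recursion while the short-hop entries simply patch up pairs on which no bridge can lie. Verifying this carefully, together with a standard union bound over levels to obtain the ``with high probability'' guarantee, completes the argument.
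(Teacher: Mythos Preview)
The paper does not prove this lemma; it is quoted from \cite{YZ05,GVW2019} and used as a black box. Your outline follows the right template, but the running-time analysis has a real gap.

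You claim each level costs $\tO(Mh_k\cdot n^{\omega})$ and that summing over $O(\log n)$ levels ``telescopes'' to $\tO(Mn^\omega)$. It does not: $\sum_k Mh_k\,n^\omega = Mn^\omega\sum_k (3/2)^k = \Theta(Mn^\omega h_{\max})$. Exploiting the rectangular inner dimension $|B_k|=n/h_k$ does not help either, since an $n\times (n/h_k)\times n$ min-plus product with $Mh_k$-bounded entries costs $\tO(Mn^\omega h_k^{3-\omega})$, still increasing in $k$. What you have written is essentially Zwick's APSP iteration, whose total cost strictly exceeds $\tO(Mn^\omega)$. The Yuster--Zwick idea you are missing is that at level $k$ one updates \emph{only} the columns and rows indexed by the current bridge set $B_k$, not all $n^2$ entries: one computes $D[V,B_k]$ via $D[V,B_{k-1}]\star D[B_{k-1},B_k]$ (and symmetrically $D[B_k,V]$), an $n\times |B_{k-1}|\times |B_k|$ product costing $\tO(Mh_{k-1}\cdot n\,|B_{k-1}|^{\omega-1})=\tO(Mn^\omega/h_{k-1}^{\omega-2})$, which is decreasing in $k$ and does sum to $\tO(Mn^\omega)$. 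This also exposes a second problem with your plan: if you truncate the recursion at $h_k\approx n^\zeta/3$, you have not yet computed exact distances $D[u,b]$ for bridge vertices $b$ that are many hops from $u$, so the identity $D[i,b]+D[b,j]=d_G(i,j)$ in your long-path case is unjustified; the full hierarchy must be run so that entries to and from every bridge vertex are exact. (A minor additional slip: your $D$ is declared exact only when $v\in B$, yet the long-path case needs $D[b,j]=d_G(b,j)$ with $b\in B$ and $j\notin B$; you need rows indexed by $B$ to be exact as well.)
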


Let $d^{YZ}(\cdot,\cdot)$ denote such computed distances. The length of a shortest hop-long departing replacement path to $v$ avoiding $e=(s_i,s_{i+1})$ equals  $\min_{j \le i} \min_{b \in B} d_G(s,s_j) + d^{YZ}(s_j, b) + d^{YZ}(b, v).$

We create two matrices $A$ and $B$, of dimensions at most $n\times|B|$ and $|B|\times n$, respectively, such that
\[
A_{i,b} =  \min_{j \le i} d_G(s,s_j) + d^{YZ}(s_j, b),
\quad\text{and}\quad
B_{b,v} = d^{YZ}(b,v).
\]
We need to compute $A \star B$. Note that $A_{i+1,b} \le A_{i,b}$, i.e.~columns of $A$ are monotone.
Moreover, finite entries of $A$ are of absolute value at most $2\cdot nM$, so we can compute $A \star B = (B^T\star A^T)^T$ in time
$\tO(n^{m(1,1-\zeta,1,1+\log_n M)}).$

\subparagraph*{Wrap-up and runtime analysis.}

Now we can sum up the running time and then balance the terms. 

\begin{lemma}\label{lem:subpath}
There is a randomized algorithm that solves the subpath problem in a directed $n$-vertex graph with edge weights in $\{-M, \ldots, M\}$ in $\tO(M^{\frac{5}{17-4\omega}}n^{\frac{36-7\omega}{17-4\omega}})$ time, with high probability. Using fast rectangular matrix multiplication improves the running time to $O(M^{0.8043}n^{2.4957})$.
\end{lemma}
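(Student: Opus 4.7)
The plan is to assemble the four ingredients described right above the lemma statement and then balance a single parameter $\zeta$ against the internal parameter $\theta$ of Theorem~\ref{thm:mmp-main}. Concretely, I would sum the four contributions to the subpath running time: $\tO(Mn^\omega)$ for the jumping-path computation (Lemma~\ref{lem:rp}); $\tO(Mn^{\zeta+\omega(1,1-\zeta,1)})$ for hop-short departing paths (Lemma~\ref{lem:hopshort}); $\tO(Mn^\omega+n^{3-2\zeta})$ for building and querying the Yuster--Zwick distance oracle on every pair in $V \times B$ whose shortest path uses at least $n^\zeta$ nodes (Lemma~\ref{lem:oracle}); and $\tO(n^{m(1,1-\zeta,1,1+\log_n M)})$ for the single monotone min-plus product $A \star B$ that realises all hop-long departing paths.

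Next, I would bound $m(1,1-\zeta,1,1+\log_n M)$ via Theorem~\ref{thm:mmp-main}, further bounding $g(1,1-\zeta,1,\theta)$ using \eqref{eqn:g-1b1-bound} from Corollary~\ref{cor:g}. For the closed-form general-$\omega$ statement I would substitute the straightforward interpolation $\omega(1,1-\zeta,1)\le 2+(1-\zeta)(\omega-2)$, which turns the whole running time into a maximum of explicit linear functions of $\zeta$, $\theta$, and $\mu:=\log_n M$. Setting the two arguments of the $\max$ inside Theorem~\ref{thm:mmp-main} equal pins $\theta$ as an affine function of $\zeta$ and $\mu$, and setting the resulting balanced monotone-min-plus exponent equal to the hop-short exponent $\omega+\zeta(3-\omega)+\mu$ pins $\zeta=4(3-\omega-\mu)/(17-4\omega)$. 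A short computation then yields the claimed exponent $(36-7\omega+5\mu)/(17-4\omega)$, i.e.\ running time $\tO(M^{5/(17-4\omega)}n^{(36-7\omega)/(17-4\omega)})$.

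For the numerical improvement $O(M^{0.8043}n^{2.4957})$ I would forgo the interpolation and instead plug the tighter rectangular matrix multiplication bounds of~\cite{LU18} directly into \eqref{eqn:g-1b1-bound}, then numerically optimise over $\zeta$ and $\theta$.

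The main obstacle I expect is bookkeeping: three variables ($\zeta$, $\theta$, $\mu$) and four runtime terms must be tracked so that the optimum is attained at an interior point where the binding constraints are the hop-short term and the monotone-min-plus term, rather than any other pairing. One must in particular check that at this optimum the $n^{3-2\zeta}$ Yuster--Zwick query term and the $\tO(Mn^\omega)$ contributions from Lemmas~\ref{lem:rp} and~\ref{lem:oracle} are dominated; this is also precisely why the $\tO(Mn^\omega)$ component from Lemma~\ref{lem:SSRP_to_subpath} can be absorbed when deriving Theorem~\ref{thm:SSRP_upper_bound}.
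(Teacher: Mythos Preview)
Your proposal is correct and follows essentially the same route as the paper: assemble the four runtime terms, expand $m(1,1-\zeta,1,1+\mu)$ via Theorem~\ref{thm:mmp-main} and \eqref{eqn:g-1b1-bound}, substitute the interpolation $\omega(1,1-\zeta,1)\le \omega(1-\zeta)+2\zeta$, and balance first $\theta$ (equating the two terms inside the $\max$ of Theorem~\ref{thm:mmp-main}) and then $\zeta$ (equating the result with the hop-short term), arriving at exactly the paper's choice $\zeta=4(3-\omega-\mu)/(17-4\omega)$. The paper also explicitly restricts attention to the range $\mu\le 3-\omega$ (where the bound is subcubic) to argue that $\tO(Mn^\omega)$ and $\tO(n^{3-2\zeta})$ are dominated and that $\zeta\ge 0$, which is precisely the bookkeeping check you flag in your last paragraph.
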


\begin{proof}
Let $M=n^\mu$. By summing bounds from Lemma~\ref{lem:rp}, Lemma~\ref{lem:hopshort}, 
and the hop-long path part (Yuster-Zwick distance oracle, and a monotone min-plus product),
we get that the running time is
\[\tO\left(n^{\mu + \omega} + n^{\mu + \zeta + \omega(1,1-\zeta,1)} + n^{3 - 2 \zeta} + n^{m(1,1-\zeta,1,1+\mu)}\right).\]
Let us skip for a moment the $\tO(n^{\mu + \omega})$ component, the only one which does not depend on the parameter $\zeta$, and let us optimize that parameter, so the exponent at $n$ equals to
\[\min_{\zeta\in[0,1]} \max\{
  \mu + \zeta + \omega(1,1-\zeta,1),\ 
  3 - 2 \zeta,\ 
  m(1,1-\zeta,1,1+\mu)\},\]
which, by Theorem~\ref{thm:mmp-main}, is bounded by
\[\min_{\substack{\zeta\in[0,1]\\\theta\in[0,1+\mu]}} \max\{
  \mu + \zeta + \omega(1,1-\zeta,1),\ 
  3 - 2 \zeta,\ 
  3 + \mu - \theta - \zeta,\ 
  \frac{1}{2} (3 - \zeta + g(1,1-\zeta,1,\theta))\},\]
which, by Corollary~\ref{cor:g}, is bounded by
\[\min_{\substack{\zeta\in[0,1]\\\theta\in[0,1+\mu]}} \max\{
  \mu + \zeta + \omega(1,1-\zeta,1),\ 
  3 - 2 \zeta,\ 
  3 + \mu - \theta - \zeta,\ 
  \frac{1}{4}(9 - 3 \zeta + \omega(1,1-\zeta,1) + \theta)\}.\]

For the way we set $\zeta$, the $3-2\zeta$ term won't be dominant in the range of $\mu$ we are interested in. See the last paragraph of this proof for a discussion. 

If we want to express the running time as a function of square matrix multiplication exponent $\omega$, we need to implement rectangular matrix multiplication by cutting rectangular matrices into square matrices, which lets us bound $\omega(1,1-\zeta,1) \le \omega \cdot (1-\zeta) + 2\zeta.$ By setting $\zeta = 4\cdot \frac{3-\omega-\mu}{17-4\omega}$ and $\theta=(3-\omega)-(4-\omega)\cdot \zeta$, we get running time
\begin{equation}\label{eqn:ssrp_time_omega}
\tO(M^{\frac{5}{17-4\omega}}n^{\frac{36-7\omega}{17-4\omega}}).
\end{equation}

Alternatively, we can  set $\zeta=0.4035-0.6434\cdot\mu$ and $\theta=0.1009+0.8391\cdot\mu$
and use the best known bounds for rectangular matrix multiplication~\cite{LU18} to get a faster running time. We first use~\cite{LU18} to bound $\omega(1, 0.5965, 1) \le 2.0922$. Then since $0.5965 \le 1-\zeta \le 1$, we can use convexity (see e.g., \cite{lotti1983asymptotic}) to bound $\omega(1, 1-\zeta, 1)$ by $2.0922 \cdot \frac{1-(1-\zeta)}{1-0.5965} + 2.3729 \cdot \frac{(1-\zeta) - 0.5965}{1-0.5965} \le 2.0922 + 0.4476 \cdot \mu$.  It is then easy to see that our algorithm runs in time
\begin{equation}\label{eqn:ssrp_time_num}
O(M^{0.8043}n^{2.4957}).    
\end{equation}

To end the proof we remark that SSRP, and thus also the subpath problem, can be solved by a trivial algorithm in $\tO(n^3)$ time, and therefore we have to prove the two bounds of Equations~\eqref{eqn:ssrp_time_omega} and~\eqref{eqn:ssrp_time_num} only in the range of $\mu$ in which these bounds are subcubic, i.e.~$\mu \le 3-\omega$. In this regime the initially omitted $\tO(Mn^\omega)$ component and the $\tO(n^{3-2\zeta})$ component are not dominant, and the above values of $\zeta$ are non-negative, as required.
\end{proof}

\subsection{Lower Bound}
In this section, we prove our conditional lower bound for SSRP.

\ThmSSRPLowerBound*

We first reduce Bounded-Difference Min-Plus Product to a problem called Ham-APSP, then we further reduce Ham-APSP to SSRP.

\begin{problem}[Ham-APSP]
Given a directed unweighted graph $G$ with vertex set $\{v_1, \ldots, v_n\}$ and a Hamiltonian path $v_1 \rightarrow \cdots \rightarrow v_n$ of $G$, compute all pairs shortest path distances in $G$.
\end{problem}

The key idea in the following reduction was used by Chan et al.~\cite{CVX} for a reduction from Min-Plus Product with small integer weights to  unweighted directed APSP.

\begin{lemma}
\label{lem:min_plus_to_ham}
If there exists a $T(n)$ time algorithm for Ham-APSP in a graph with $n$ vertices, then there exists an $O(T(n) \sqrt{n})$ time algorithm for Bounded-Difference Min-Plus Product of $n \times n$ matrices.
\end{lemma}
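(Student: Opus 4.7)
The plan is to adapt the reduction of Chan et al.~\cite{CVX} from small-integer-weight Min-Plus Product to unweighted directed APSP, exploiting the bounded-difference structure to keep the Ham-APSP graph on only $O(n)$ vertices. After a global shift I may assume both matrices have entries in $[0, 2n]$. I would partition the middle dimension $[n]$ into $\sqrt{n}$ contiguous blocks $Q_1, \dots, Q_{\sqrt{n}}$ of size $\sqrt{n}$ each, and reduce the Min-Plus product to $\sqrt{n}$ partial products $C^Q_{i,j} = \min_{k \in Q} A_{i,k} + B_{k,j}$, with $C_{i,j} = \min_Q C^Q_{i,j}$. Each partial product would be computed by a single Ham-APSP call on an $O(n)$-vertex graph, and the final entrywise minimum is taken in $O(n^{2.5})$ additional time.

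For a fixed block $Q$ with first index $k_0$, I would define residuals $\tilde A_{i,k} = A_{i,k} - A_{i,k_0} + \sqrt{n}$ and $\tilde B_{k,j} = B_{k,j} - B_{k_0,j} + \sqrt{n}$ for $k \in Q$. The bounded-difference property puts both residuals in $[0, 2\sqrt{n}]$, and
\[C^Q_{i,j} = A_{i,k_0} + B_{k_0,j} - 2\sqrt{n} + \min_{k \in Q}(\tilde A_{i,k} + \tilde B_{k,j}),\]
so it suffices to compute the inner min-plus. For that I would use the three-layer construction of Chan et al.~\cite{CVX}: sources $u_i$ for $i \in [n]$, middles $v_k$ for $k \in Q$, targets $w_j$ for $j \in [n]$; a per-middle $A$-ladder $L^A_k$ of length $2\sqrt{n}$ with entry edges $u_i \to L^A_k[2\sqrt{n} - \tilde A_{i,k}]$ and exit $L^A_k[2\sqrt{n}] \to v_k$; and a per-middle $B$-ladder $L^B_k$ of length $2\sqrt{n}$ with entry $v_k \to L^B_k[0]$ and exits $L^B_k[\tilde B_{k,j}] \to w_j$. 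The vertex count is $2n + \sqrt{n} + 2\sqrt{n}(2\sqrt{n}+1) = O(n)$, and by design the intended shortest path from $u_i$ to $w_j$ has length $\min_{k \in Q}(\tilde A_{i,k} + \tilde B_{k,j}) + 4$.

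The main obstacle is turning the above weighted construction into a valid Ham-APSP instance without letting the mandatory Hamiltonian-path edges shortcut the intended distances. Because the row shift for $A$ and the column shift for $B$ reduce the bounded-difference constant from $1$ to $2$ in the source and target dimensions respectively, walking $m$ Hamiltonian steps between $u_i$ and $u_{i'}$ (or between $w_j$ and $w_{j'}$) in the naive layout would cost only $m$ but could in principle save up to $2m$ in ladder length. I would defeat this by inserting exactly one spacer vertex between every consecutive pair of source vertices and between every consecutive pair of target vertices in the Hamiltonian ordering, so that the Ham-distance between $u_i$ and $u_{i'}$ becomes exactly $2|i - i'|$, matching the factor-$2$ bounded-difference bound; a block of $2\sqrt{n}+1$ additional spacer vertices placed between the source segment and the middle/$B$-ladder segment prevents the Ham edge leaving $u_n$ from shortcutting directly into a middle vertex. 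A case-by-case analysis using the bounded-difference property then shows that no Ham-induced detour beats the intended ladder route, so $d(u_i, w_j) = \min_{k\in Q}(\tilde A_{i,k} + \tilde B_{k,j}) + 4$, from which $C^Q_{i,j}$ is recovered in $O(1)$ time. Summing over the $\sqrt{n}$ blocks gives total running time $\sqrt{n}\cdot T(O(n)) + O(n^{2.5}) = O(T(n)\sqrt{n})$.
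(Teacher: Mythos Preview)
Your proposal is correct and follows essentially the same approach as the paper: split the inner dimension into $\sqrt{n}$ blocks, normalize each block by subtracting a reference row/column so that the residual weights are $O(\sqrt{n})$, build the Chan et al.\ three-layer unweighted gadget on $O(n)$ vertices, and thread a Hamiltonian path through it using length-$2$ connectors between consecutive sources and between consecutive targets (so the factor-$2$ blow-up in the bounded-difference constant after normalization is exactly absorbed), together with longer spacer segments between the layers so that inter-layer Hamiltonian edges cannot shortcut. The only cosmetic differences are that the paper uses a single shared ladder $p_k$ per middle index rather than separate $A$- and $B$-ladders with an explicit $v_k$, and its inter-layer connectors have length $3\sqrt{n}$ (which you would likewise need to enlarge, since your intended distances range up to roughly $4\sqrt{n}$ rather than $2\sqrt{n}$).
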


\begin{proof}

Given two $n \times n$ bounded-difference matrices $A$ and $B$, we first split the columns of $A$ and rows of $B$ to $O(\sqrt{n})$ pieces. For each pair of pieces, we need to compute the min-plus product of an $n \times \sqrt{n}$ bounded-difference matrix $X$ and a $\sqrt{n} \times n$ bounded-difference matrix $Y$. We will use a single call of the assumed $T(n)$ time algorithm for Ham-APSP to compute the min-plus product between each pair of pieces, yielding an $O(T(n) \sqrt{n})$ overall running time.

We create a new matrix $X'$ such that $X'_{i, k} = X_{i, k} - X_{i, 1}$.
Since $|X_{i,k}-X_{i,k+1}| \le 1$ for any $i, k$,  all entries of $X'$ are bounded by $\sqrt{n}$. We can create $Y'$ similarly by setting $Y'_{k, j} = Y_{k, j} - Y_{1, j}$ so that all entries of $Y'$ are bounded by $\sqrt{n}$ as well. We will later use the Ham-APSP algorithm to compute $X' \star Y'$, which immediately gives $X \star Y$ via the relation $(X \star Y)_{i, j} = (X' \star Y')_{i,j}+X_{i,1}+Y_{1, j}$.

In \cite{CVX}, Min-Plus Product of an $n \times \sqrt{n}$ and a $\sqrt{n} \times n$ matrices with weights up to $\sqrt{n}$ is reduced to unweighted directed APSP on $n$-node graphs. Here is a description of that reduction. We create a vertex set $I$ of $n$ vertices $\{a_1, \ldots, a_n\}$ and a vertex set $J$ of $n$ vertices $\{b_1, \ldots, b_n\}$. We also create $\sqrt{n}$ paths $p_1, \ldots, p_{\sqrt{n}}$ each of length $2\sqrt{n}$. From each $a_i$ to $p_k$, we add a directed edge from $a_i$ to the $(\sqrt{n} - X'_{i, k})$-th node on $p_k$; similarly, from each $p_k$ to $b_j$, we add a directed edge from the $(\sqrt{n} + Y'_{k, j})$-th node on $p_k$ to $b_j$. Then we can see that the distance from $a_i$ to $b_j$ equals $(X' \star Y')_{i, j}+2$.

In order to have a Hamiltonian path in the graph, we need to add two types of additional paths. 

For the first type, we add  paths of length $2$ from $a_i$ to $a_{i+1}$ and from $b_{i}$ to $b_{i+1}$ for every $1 \le i < n$. Clearly, we only add $O(n)$ vertices and $O(n)$ edges. 
Now consider the shortest path from $a_i$ to $b_j$ for some $i, j$. The shortest path has the option to go to some $a_{i'}$ for $i' \ge i$, then choose some path $p_k$ in the middle, then go to $b_{j'}$ for $j' \le j$, and finally reach $b_j$. The cost of this path would be $2(i'-i) + 1 + X'_{i',k}+Y'_{k,j'}+1 + 2(j-j')$. Because  $X$ and $Y$ have bound differences, we have that $2(i'-i) + X'_{i',k} \ge X'_{i, k}$ and $Y'_{k,j'}+2(j-j') \ge Y'_{k, j}$. Therefore, in one of the shortest paths from $a_i$ to $b_j$, we have $i'=i$ and $j'=j$. Thus, the distance from $a_i$ to $b_j$ is exactly $2+ (X' \star Y')_{i, j}$, so we can recover $X' \star Y'$ by computing all the pairwise distances.

For the second type of paths, we add $O(\sqrt{n})$ paths of lengths $3\sqrt{n}$ to connect $I$, $J$ and each $p_k$, as shown in Figure~\ref{fig:min_plus_to_ham}. The total number of vertices and number of edges added are both $O(n)$. 
Since all distances we care about are at most $2\sqrt{n}+O(1)$, adding those paths won't affect these distances.

This graph now has a Hamiltonian path: we can travel from $a_1$ to $a_n$ via the first type of paths. Then we use the second type of paths to travel from $a_n$ to the beginning of $p_1$ and then we can easily travel to the end of $p_1$ by using edges of $p_1$. Similarly, we can go through all vertices in $p_2, \ldots, p_{\sqrt{n}}$. Finally, we travel from the end of $p_{\sqrt{n}}$ to $b_1$ via the second type of paths, and then use the first type of paths to travel to $b_n$. 
\end{proof}

\begin{figure}
    \centering
        \begin{tikzpicture}
		
	   \node at(-5, 3)  [circle,fill,inner sep=2pt,label=left:$a_1$] (a1){};
	   \node at(-5, 1)  [circle,fill,inner sep=2pt,label=left:$a_2$] (a2){};
	   \node at(-5, -3)  [circle,fill,inner sep=2pt,label=left:$a_n$] (an){};
	   \path (a2) -- node[auto=false]{\vdots} (an);

        \node[ellipse, draw,align=left,minimum width = 2.5cm,minimum height = 8cm,label={[shift={(0,0)}]$I$}] at (-5, 0) (I) {};

	   \node at(5, 3)  [circle,fill,inner sep=2pt,label=right:$b_1$] (b1){};
	   \node at(5, 1)  [circle,fill,inner sep=2pt,label=right:$b_2$] (b2){};
	   \node at(5, -3)  [circle,fill,inner sep=2pt,label=right:$b_n$] (bn){};
	   \path (b2) -- node[auto=false]{\vdots} (bn);
        \node[ellipse, draw,align=left,minimum width = 2.5cm,minimum height = 8cm,label={[shift={(0,0)}]$J$}] at (5, 0) (I) {};

	    \node at(-3, 2)  [] (p1){$p_1$};
        \node at(-2, 2)  [circle,fill,inner sep=1pt] (p11){};
        \node at(-1, 2)  [circle,fill,inner sep=1pt] (p12){};
        \node at(0, 2)  [circle,fill,inner sep=1pt] (p13){};
        \node at(1, 2)  [circle,fill,inner sep=1pt] (p14){};
        \node at(2, 2)  [circle,fill,inner sep=1pt] (p15){};
        \node at(3, 2)  [circle,fill,inner sep=1pt] (p16){};
         \draw[->, ] (p11) to[] node[] {} (p12);
         \draw[->, ] (p12) to[] node[] {} (p13);
          \path (p13) -- node[auto=false]{\ldots} (p14);
         \draw[->, ] (p14) to[] node[] {} (p15);
          \draw[->, ] (p15) to[] node[] {} (p16);

         \draw[->, bend left] (a1) to[] node[] {} (p12);
          \draw[->, bend left] (p14) to[] node[] {} (b1);
          \draw [] (-1, 2.1) -- (-1, 2.5);
          \draw [] (1, 2.1) -- (1, 2.5);
           \draw[<->] (-1, 2.3) -- (1, 2.3) node[midway, above]{\tiny{$X'_{1,1}+Y'_{1,1}-1$}};

		\node at(-3, 0.5)  [] (p1){$p_2$};
        \node at(-2, 0.5)  [circle,fill,inner sep=1pt] (p21){};
        \node at(-1, 0.5)  [circle,fill,inner sep=1pt] (p22){};
        \node at(0, 0.5)  [circle,fill,inner sep=1pt] (p23){};
        \node at(1, 0.5)  [circle,fill,inner sep=1pt] (p24){};
        \node at(2, 0.5)  [circle,fill,inner sep=1pt] (p25){};
        \node at(3, 0.5)  [circle,fill,inner sep=1pt] (p26){};
         \draw[->, ] (p21) to[] node[] {} (p22);
         \draw[->, ] (p22) to[] node[] {} (p23);
          \path (p23) -- node[auto=false]{\ldots} (p24);
         \draw[->, ] (p24) to[] node[] {} (p25);
          \draw[->, ] (p25) to[] node[] {} (p26);
		
		\node at(-3, -2)  [] (p1){$p_{\sqrt{n}}$};
        \node at(-2, -2)  [circle,fill,inner sep=1pt] (pn1){};
        \node at(-1, -2)  [circle,fill,inner sep=1pt] (pn2){};
        \node at(0, -2)  [circle,fill,inner sep=1pt] (pn3){};
        \node at(1, -2)  [circle,fill,inner sep=1pt] (pn4){};
        \node at(2, -2)  [circle,fill,inner sep=1pt] (pn5){};
        \node at(3, -2)  [circle,fill,inner sep=1pt] (pn6){};
         \draw[->, ] (pn1) to[] node[] {} (pn2);
         \draw[->, ] (pn2) to[] node[] {} (pn3);
          \path (pn3) -- node[auto=false]{\ldots} (pn4);
         \draw[->, ] (pn4) to[] node[] {} (pn5);
          \draw[->, ] (pn5) to[] node[] {} (pn6);

          \path (p23) -- node[auto=false]{\vdots} (pn3);

		\node at(-5, 2)  [circle,fill=green,inner sep=1pt] (a1.5){};
		 \draw[->, green] (a1) to[] node[] {} (-5, 2) ;
		\draw[->, green] (-5, 2) to[] node[] {} (a2) ;
		
		\node at(-5, 0)  [circle,fill=green,inner sep=1pt] (){};
		 \draw[->, green] (a2) to[] node[] {} (-5,0) ;		 
		\draw[->, green] (-5,0) to[] node[] {} (-5,-0.9) ;
		
		\node at(-5, -2.2)  [circle,fill=green,inner sep=1pt] (){};
		 \draw[->, green] (-5,-2.2) to[] node[] {} (an) ;		 
		\draw[->, green] (-5, -1.4) to[] node[] {} (-5,-2.2) ;

		\node at(5, 2)  [circle,fill=green,inner sep=1pt] (b1.5){};
		 \draw[->, green] (b1) to[] node[] {} (5, 2) ;
		\draw[->, green] (5, 2) to[] node[] {} (b2) ;
		
		\node at(5, 0)  [circle,fill=green,inner sep=1pt] (){};
		 \draw[->, green] (b2) to[] node[] {} (5,0) ;		 
		\draw[->, green] (5,0) to[] node[] {} (5,-0.9) ;
		
		\node at(5, -2.2)  [circle,fill=green,inner sep=1pt] (){};
		 \draw[->, green] (5,-2.2) to[] node[] {} (bn) ;		 
		\draw[->, green] (5, -1.4) to[] node[] {} (5,-2.2) ;

		\tikzset{
my dash/.style={->, blue, dash pattern=on 15pt off 2pt,postaction={decorate,
    decoration={markings,
    mark=between positions 15pt and 1 step 17pt with {\arrow{>};}}},
                }
         }
         
      \draw[my dash] (an) to[] node[] {} (p11) ;
      \draw[blue, line width=3pt, line cap=round, dash pattern=on 0pt off 17pt, dash phase=2pt] (an) to[] node[] {} (p11) ;
      
       \draw[my dash] (p16) to[] node[] {} (p21) ;
      \draw[blue, line width=3pt, line cap=round, dash pattern=on 0pt off 17pt, dash phase=2pt] (p16) to[] node[] {} (p21) ;

      \draw[my dash] (p26) to[] node[] {} (-2, -0.75) ;
      \draw[blue, line width=3pt, line cap=round, dash pattern=on 0pt off 17pt, dash phase=2pt] (p26) to[] node[] {} (-2, -0.75);

       \draw[my dash] (3, -0.75) to[] node[] {} (pn1) ;
      \draw[blue, line width=3pt, line cap=round, dash pattern=on 0pt off 17pt, dash phase=2pt] (pn1) to[] node[] {} (3, -0.75);
      
      \draw[my dash] (pn6) to[] node[] {} (b1) ;
      \draw[blue, line width=3pt, line cap=round, dash pattern=on 0pt off 17pt, dash phase=2pt] (pn6) to[] node[] {} (b1);
    \end{tikzpicture}
    \caption{Reduction from Bounded-Difference Min-Plus Product to Ham-APSP (Lemma~\ref{lem:min_plus_to_ham}). Most edges between the parts $I, J$ and the middle paths $p_1, \ldots, p_{\sqrt{n}}$ are omitted for clarity. The green portions are the first type paths, and the blue portions are the second type paths. }
    \label{fig:min_plus_to_ham}
\end{figure}
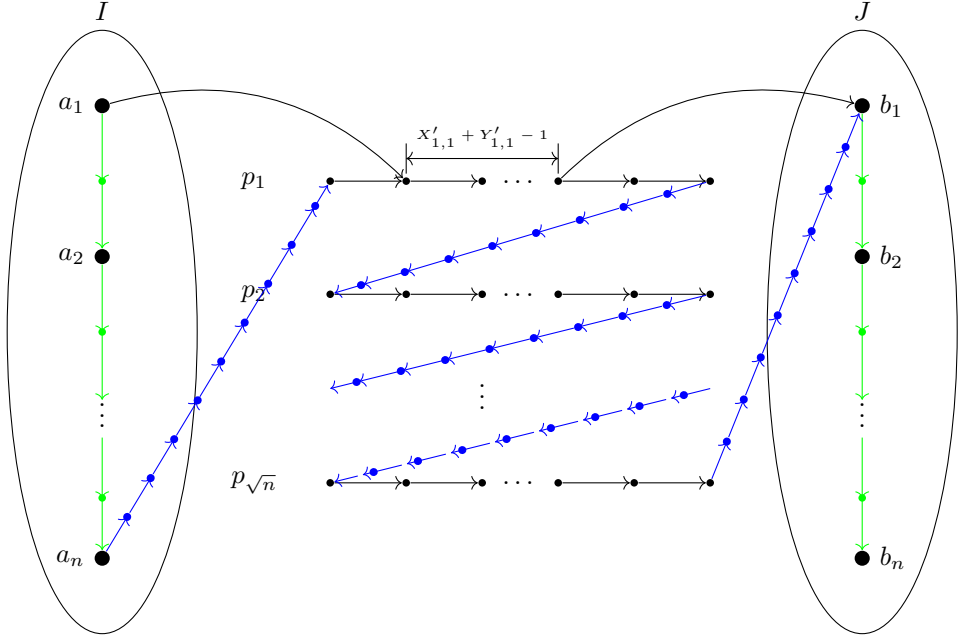






In the following lemma, we further reduce Ham-APSP to SSRP.

\begin{lemma}\label{lem:hamapsp-to-ssrp}
If there exists a $T(n)$ time algorithm for SSRP in a graph with $n$ vertices whose edge weights are in $\{-1, 0, 1\}$, then there exists an $O(T(n))$ time algorithm for Ham-APSP in a graph with $n$ vertices.
\end{lemma}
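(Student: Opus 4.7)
The plan is to reduce Ham-APSP on an $n$-vertex graph $G$ with Hamiltonian path $v_1 \to v_2 \to \cdots \to v_n$ to a constant number of SSRP queries on an auxiliary graph of size $O(n)$ with edge weights in $\{-1,0,1\}$, followed by a simple post-processing step that recovers all pairwise distances.

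First I would construct an auxiliary graph $G'$ by introducing a new source $s$ and a chain of $n$ auxiliary vertices $a_1, \ldots, a_n$ forming the path $s \to a_1 \to a_2 \to \cdots \to a_n$ with chain edges of weight $0$, together with attachment edges $a_i \to v_i$ of weight $-1$, keeping every original edge of $G$ with weight $+1$. Because the only negative edges are the $n$ attachment edges $a_i \to v_i$ and there is no way to return from $v_i$ back to the $a$-chain, $G'$ has no negative cycles. The negative weight on the attachment edges forces the canonical shortest path from $s$ to each $v_j$ to be the direct route $s \to a_1 \to \cdots \to a_j \to v_j$ of total weight exactly $-1$, since any alternative route $s \to a_1 \to \cdots \to a_k \to v_k \to \cdots \to v_j$ for $k \ne j$ costs at least $-1 + d_G(v_k,v_j) \ge 0$. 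Consequently the shortest-paths tree from $s$ in $G'$ consists of the entire chain $s \to a_1 \to \cdots \to a_n$ together with all $n$ attachment edges $a_i \to v_i$.

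Next I would run the assumed SSRP algorithm on $G'$. Removing the chain edge $(a_{i-1}, a_i)$ disconnects $a_i, \ldots, a_n$ from $s$, so for every $j \ge i$ the shortest path from $s$ to $v_j$ must leave the chain at some $a_k \to v_k$ with $k < i$ and then traverse $G$; this yields the SSRP value $d_{G'}(s, v_j, (a_{i-1}, a_i)) = -1 + \min_{k<i} d_G(v_k, v_j)$, i.e., the prefix minimum of $d_G(\cdot, v_j)$ along the Hamiltonian order. Running a symmetric second SSRP call on the mirror graph (chain reversed, analogous weights) would analogously yield the suffix minima $-1 + \min_{k>i} d_G(v_k, v_j)$ for $j \le i$. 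By scanning these prefix/suffix minima for each fixed $j$ along the Hamiltonian order, I can read off $d_G(v_i, v_j)$ at every index $i$ where $v_i$ becomes a new minimum.

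The main obstacle is this decoding step: SSRP at first glance returns only $\min_k$-aggregates of the pairwise distances $d_G(v_k, v_j)$ rather than the individual distances, so for pairs $(i,j)$ in which $v_i$ is strictly dominated on both sides, the prefix and suffix minima together do not immediately pin down $d_G(v_i, v_j)$. Overcoming this gap is the technical heart of the argument and will likely require one additional tailored SSRP call on a further-modified graph, combined with a combinatorial observation that uses the linear order supplied by the Hamiltonian path together with the monotonicity of the relevant minima to uniquely recover the remaining distances. Since $|V(G')| = O(n)$, each SSRP call costs $T(n)$ and the post-processing fits in $O(n^2) = O(T(n))$ time (because $T(n) = \Omega(n^2)$ by the output size of SSRP), giving the claimed total running time of $O(T(n))$.
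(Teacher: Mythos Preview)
Your proposal has a genuine gap, which you yourself flag as ``the main obstacle'': from a single SSRP call your construction only yields the prefix minima $\min_{k<i} d_G(v_k,v_j)$, and you do not show how to recover the individual distances $d_G(v_i,v_j)$ from constantly many such calls. The hand-wave about ``one additional tailored SSRP call \ldots\ combined with a combinatorial observation'' is not a proof; in fact prefix and suffix minima together are provably insufficient to reconstruct all entries of an arbitrary distance matrix, so some further idea is genuinely needed.

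The paper's proof avoids this decoding problem entirely by a small but crucial change to your construction: instead of putting weight $0$ on the chain edges and $-1$ on the attachment edges, it puts weight $-1$ on the \emph{chain} edges (and $0$ on the attachments), and orients the chain so that the source reaches $v_k'$ at cost $k-n-1$. Then, after deleting the chain edge at position $i$, the replacement distance to $v_j$ is $\min_{k\ge i}\bigl((k-n-1)+d_G(v_k,v_j)\bigr)$. The Hamiltonian path guarantees $d_G(v_i,v_k)\le k-i$, so by the triangle inequality $(k-n-1)+d_G(v_k,v_j)\ge (i-n-1)+d_G(v_i,v_j)$ for every $k\ge i$; hence the minimum is always attained at $k=i$ and the SSRP value equals $(i-n-1)+d_G(v_i,v_j)$ exactly. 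In other words, the $-1$ per chain step exactly offsets the at-most-one-per-step saving available via the Hamiltonian path, collapsing the prefix minimum to a single term. With this one change a \emph{single} SSRP call already returns every pairwise distance, and no decoding is required.
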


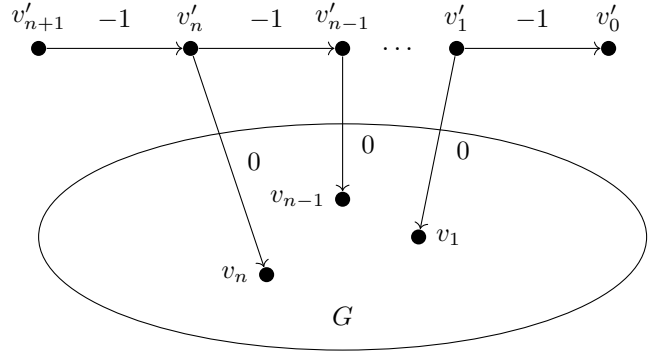
\begin{figure}
    \centering
    \begin{tikzpicture}
        \node at(0, 0)  [circle,fill,inner sep=2pt,label=above:$v_{n+1}'$] (s){};
        \node at(2, 0)  [circle,fill,inner sep=2pt,label=above:$v_{n}'$] (vn'){};
        \node at(4, 0)  [circle,fill,inner sep=2pt,label=above:$v_{n-1}'$] (vn-1'){};
        \node at(5.5, 0)  [circle,fill,inner sep=2pt,label=above:$v_{1}'$] (v1'){};
        \node at(7.5, 0)  [circle,fill,inner sep=2pt,label=above:$v_{0}'$] (v0){};

        \draw[->, ] (s) to[] node[label=above:$-1$] {} (vn');
        \draw[->, ] (vn') to[] node[label=above:$-1$] {} (vn-1');
        \path (vn-1') -- node[auto=false]{\ldots} (v1');
        \draw[->, ] (v1') to[] node[label=above:$-1$] {} (v0);

        \node[ellipse, draw,align=left, xshift=9cm, yshift=-2.5cm,minimum width = 8cm,minimum height = 3cm,label={[shift={(0,-2.8)}]$G$}] at (-5, 0) (I) {};

        \node at(3, -3)  [circle,fill,inner sep=2pt,label=left:$v_{n}$] (vn){};
        \node at(4, -2)  [circle,fill,inner sep=2pt,label=left:$v_{n-1}$] (vn-1){};
        \node at(5, -2.5)  [circle,fill,inner sep=2pt,label=right:$v_{1}$] (v1){};

        \draw[->, ] (vn') to[] node[label=right:$0$] {} (vn);
        \draw[->, ] (vn-1') to[] node[yshift=-8pt,label=right:$0$] {} (vn-1);
        \draw[->, ] (v1') to[] node[yshift=-3pt,label=right:$0$] {} (v1);
    \end{tikzpicture}
    \caption{Reduction from Ham-APSP to SSRP with weights in $\{-1, 0, 1\}$ (Lemma~\ref{lem:hamapsp-to-ssrp}).}
    \label{fig:ham_to_ssrp}
\end{figure}

\begin{proof}
Let $G$ be an instance of the Ham-APSP problem. We first create a graph $G'$  whose vertex set is $\{v_0', v_1', \ldots, v_n', v_{n+1}'\} \cup \{v_1, \ldots, v_n\}$. Then we add the following three types of edges to $G$, as depicted in Figure~\ref{fig:ham_to_ssrp}:
\begin{enumerate}
    \item We add an edge from $v_{i}'$ to $v_{i-1}'$ of weight $-1$ for every $1 \le i \le n+1$ ;
    \item We add an edge from $v_{i}'$ to $v_i$ of weight $0$ for every $1 \le i \le n$ ;
    \item We add an edge from $v_{i}$ to $v_j$ of weight $1$ for every $(v_i, v_j) \in E(G)$. This part essentially pastes a copy of $G$ to $G'$.
\end{enumerate}

If we cut the edge $(v_i', v_{i-1}')$ in the graph $G'$, then the shortest path from $v_{n+1}'$ to $v_j$ for some $1 \le j \le n$ must move from $v_{n+1}'$ to $v_k'$ for some $i \le k \le n$, then take the weight $0$ edge to $v_k$, and finally move to $v_j$ in the copy of graph $G$. Therefore, $d_{G'}(v_{n+1}', v_i, (v_i', v_{i-1}')) = \min_{i \le k \le n} (k-n-1)+d_G(v_k, v_j)$. We show that $\min_{i \le k \le n} (k-n-1)+d_G(v_k, v_j) = (i-n-1)+d_G(v_i, v_j)$. Clearly, $\min_{i \le k \le n} (k-n-1)+d_G(v_k, v_j) \le (i-n-1)+d_G(v_i, v_j)$ since the right hand side is one of the terms we are minimizing over.

To show the other direction, we fix an arbitrary $k \in [i, n]$. By triangle inequality, $d_G(v_i, v_k) + d_G(v_k, v_j) \ge d_G(v_i, v_j)$. Since $G$ has a Hamiltonian path $v_1 \rightarrow \cdots \rightarrow v_n$, it holds that $d_G(v_i, v_k) \le k-i$. Hence, \[(k-n-1)+d_G(v_k, v_j) \ge (k-n-1)+d_G(v_i, v_j) - d_G(v_i, v_k) \ge  (i-n-1)+d_G(v_i, v_j).\]

We have shown that $d_{G'}(v_{n+1}', v_i, (v_i', v_{i-1}')) = (i-n-1)+d_G(v_i, v_j)$. Thus, we can infer the pairwise distances in $G$ by querying the assumed $T(n)$ time SSRP algorithm on graph $G'$ since $d_G(v_i, v_j) = d_{G'}(v_{n+1}', v_i, (v_i', v_{i-1}'))- (i - n -1)$.
\end{proof}

\section{Range Mode}\label{sec:range-mode}
In this section, we show our improved algorithms for Batch Range Mode and Dynamic Range Mode. 

\subsection{Batch Range Mode}

\ThmBatch*

\begin{proof}
	Via a binary search, Sandlund and Xu \cite{SX20} showed that the Batch Range Mode problem can be reduced to finding the frequencies of the most frequent elements for all queries (with an $\tO(1)$ factor overhead),
	which is in turn reduced to Monotone Min-Plus Product in \cite[Theorem 6.1]{SX20}, leading to a deterministic \[\tO(n^{\min_{0\le \tau \le 1} \max\{ m(1-\tau, 1-\tau,1-\tau, \tau), 1+\tau\}})\]
	time algorithm for Batch Range Mode.

	Expanding the expression using Theorem \ref{thm:mmp-main} and \eqref{eqn:g-111-bound}, we have
	\begin{align}
		& \min_{0\le \tau \le 1} \max\{m(1-\tau, 1-\tau,1-\tau, \tau), 1+\tau\} \nonumber \\
		& = \min_{0\le \tau \le 1} \max\{(1-\tau) m(1,1,1,\frac{\tau}{1-\tau}), 1+\tau\} \nonumber \\
		& \le \min_{0\le \tau \le 1} \max\{(1-\tau) \min_{0\le \theta \le \frac{\tau}{1-\tau}} \max\{2+\frac{\tau}{1-\tau}-\theta, \frac 12(3 + g(1,1,1, \theta))\}, 1+\tau\} \nonumber \\
		& = \min_{\substack{0\le \tau \le 1\\ 0\le \theta \le \frac{\tau}{1-\tau}}} \max\{\tau+(1-\tau)(2-\theta), \frac{1-\tau}2 (3+g(1,1,1,\theta)), 1+\tau\} \nonumber \\
		& \le \min_{\substack{0\le \tau \le 1\\ 0\le \theta \le \frac{\tau}{1-\tau}\\ 0\le \delta \le 1}} \max\{\tau+(1-\tau)(2-\theta), \frac{1-\tau}2 (3+\max\{\omega(1,1,2-\delta)+\theta,2+\delta\}), 1+\tau\}. \label{eqn:range-mode}
	\end{align}
    
    By using $\omega(1, 1, 2 - \delta) \le 1-\delta + \omega$ and taking $\delta = \frac{4\omega - 3}{9}, \theta = \frac{3-\omega}{9}, \tau = \frac{6+\omega}{15+\omega}$, we can upper bound \eqref{eqn:range-mode} by $\frac{21+2\omega}{15+\omega}$.
	Using the upper bounds in \cite{LU18}, we find that when
	$\tau = 0.4804$, $\theta = 0.0754$, $\delta = 0.6984$, expression \eqref{eqn:range-mode} takes value $\le 1.4805$.
\end{proof}

\subsection{Dynamic Range Mode}
\ThmDynamic*

Our strategy is to improve \cite[Lemma 11]{SX20}. The following  Min-Plus-Query-Witness problem defined in \cite[Problem 7]{SX20} plays a key role in the algorithm for Dynamic Range Mode.
\begin{problem}[Min-Plus-Query-Witness problem]
	We are given two matrices $A$ and $B$ and are able to perform preprocessing before the first query.
	For each query, we are given two indices $i,j$ and a set $S$ of indices, and we must output
	an index $k^*\not \in S$ such that
	\[A_{i,k^*} + B_{k^*,j} = \min_{k\not \in S} \{A_{i,k} + B_{k,j}\}.\]
\end{problem}

We recall the following lemma from \cite{SX20}.

\begin{lemma}[{\cite[Lemma 9]{SX20}}]\label{lemma:mpqw-one-bounded}
	Let $\beta$ and $\theta$ be non-negative real numbers.
	The Min-Plus-Query-Witness problem where $A$ is an $n \times n^\beta$ matrix whose entries are in $\{-n^\theta, \ldots, n^\theta\} \cup \{\infty\}$ and $B$ is an $n^\beta \times n$ matrix
	can be solved with
	\begin{itemize}
		\item $\tO(n^{\theta + \omega(1, \beta, 1) + \beta -\sigma})$ preprocessing time,
		\item $\tO(|S| + n^{\sigma})$ worst-case time per query,
		\item and $\tO(n^{\max\{2+\beta+\theta, 1+2\beta\}-\sigma})$ space,
	\end{itemize}
	for every $0\le \sigma\le \beta$.
\end{lemma}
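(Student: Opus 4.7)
The plan is to extend the three-phase framework of Theorem \ref{thm:mmp-main} to the Min-Plus-Query-Witness setting so as to obtain a fast Monotone Min-Plus-Query-Witness solver, and then substitute it into Sandlund and Xu's reduction from Dynamic Range Mode to Monotone MPQW in place of their \cite[Lemma 11]{SX20}.

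In the preprocessing stage I would essentially rerun the derandomized Phases 1 and 2 of Section \ref{sec:derand}: first compute $\tilde{C} = \tilde{A}\star\tilde{B}$ by Lemma \ref{lemma:mmp-simple} (which stays efficient because $\tilde{B}$ inherits monotonicity); then greedily pick $n^\rho$ pivot columns $j^r$, and for each one build the matrices $A^r$ and $B^r$; instead of computing $A^r\star B^r$ outright, I would feed each pair to the bounded-entry MPQW algorithm of Lemma \ref{lemma:mpqw-one-bounded} (applicable because all finite entries of $A^r$ have magnitude $O(n^\theta)$) and keep the resulting data structure. In parallel, using the iteration from Lemma \ref{lemma:mmp-phase-3}, enumerate the $O(n^{2+\beta-\rho})$ moderately relevant uncovered triples $(i,k,j)$ and store each $(i,j)$ bucket as a list sorted by $A_{i,k}+B_{k,j}$.

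A query $(i,j,S)$ is then answered by (i) querying each of the $n^\rho$ bounded-entry MPQW subroutines to obtain candidate witnesses $k^r \notin S$, with the returned value translated into a candidate witness for $A\star B$ by adding $\tilde{C}_{i,j^r}$; and (ii) scanning the stored $(i,j)$ bucket in sorted order, skipping indices in $S$ (detected via a hash table) until an admissible $k \notin S$ is found. Correctness follows from Lemmas \ref{lemma:strong-moderately} and \ref{lemma:moderately-weak}: the true witness $k^\star$ is either covered, in which case it is examined by the corresponding subroutine, or it is moderately relevant and uncovered, in which case it resides in the stored bucket. The per-query cost totals $\tO(n^\rho(|S|+n^\sigma))$ for the subroutine queries plus the bucket scan, and space is dominated by the $n^\rho$ MPQW data structures (by Lemma \ref{lemma:mpqw-one-bounded}) together with the stored triples.

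The main obstacle will be balancing $\rho$, $\theta$, $\sigma$, and the rebuild parameter from the Sandlund--Xu reduction against each other so that preprocessing respects the rebuild budget, $|S|$ is controlled, and query time and space are simultaneously minimized. Solving those balance equations and substituting into the reduction should yield the $\omega$-dependent bounds $\tO(n^{(\omega+9)/(\omega+15)})$ and $\tO(n^{(3\omega+39)/(2\omega+30)})$. The improved numeric bounds $O(n^{0.6524})$ and $O(n^{1.3262})$ will come from replacing the naive $\omega(1,\beta,1)\le (1-\beta)\cdot 2+\beta\cdot\omega$ estimate with the rectangular matrix multiplication bounds of \cite{LU18} inside Corollary \ref{cor:g}, a numerical optimization analogous to the one carried out for Theorem \ref{thm:batch-range-mode}.
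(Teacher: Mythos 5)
Your proposal does not actually prove the statement in question. Lemma~\ref{lemma:mpqw-one-bounded} is the \emph{bounded-entry} Min-Plus-Query-Witness result, which the paper does not prove at all but imports verbatim from \cite[Lemma~9]{SX20}; it is the black box used later inside the proof of Lemma~\ref{lemma:mpqw-monotone}. What you describe --- scaling and computing $\tilde A\star\tilde B$, greedily choosing pivot columns $j^r$, building $A^r,B^r$, enumerating the $O(n^{2+\beta-\rho})$ moderately relevant uncovered triples, and splitting queries into a covered case and an uncovered-bucket case --- is precisely the paper's proof of Lemma~\ref{lemma:mpqw-monotone} (and then Theorem~\ref{thm:dynamic-range-mode}), not of Lemma~\ref{lemma:mpqw-one-bounded}. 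Worse, your argument is circular with respect to the statement you were asked to prove: in the preprocessing step you ``feed each pair $(A^r,B^r)$ to the bounded-entry MPQW algorithm of Lemma~\ref{lemma:mpqw-one-bounded}'', i.e.\ you assume exactly the result whose proof is being requested.

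What is genuinely missing is the construction for the bounded-entry case itself: a data structure that, given an $n\times n^\beta$ matrix $A$ with entries in $\{-n^\theta,\dots,n^\theta\}\cup\{\infty\}$ and an arbitrary $B$, supports queries $\min_{k\notin S}\{A_{i,k}+B_{k,j}\}$ with a witness in $\tO(|S|+n^\sigma)$ time after $\tO(n^{\theta+\omega(1,\beta,1)+\beta-\sigma})$ preprocessing and within $\tO(n^{\max\{2+\beta+\theta,\,1+2\beta\}-\sigma})$ space. This requires a direct argument exploiting the $n^\theta$ bound on the entries of $A$ (so that min-plus information can be extracted from fast rectangular matrix products over the $O(n^\theta)$ value levels) together with a partition of the inner dimension controlled by the parameter $\sigma$, so that at query time the forbidden indices in $S$ and a budget of $n^\sigma$ additional candidates suffice to recover the constrained minimum and a witness; the three-way trade-off among $\theta$, $\beta$, $\sigma$ in the stated bounds is exactly the signature of that construction. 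None of this appears in your write-up, so as a proof of Lemma~\ref{lemma:mpqw-one-bounded} it has a fundamental gap, even though the surrounding machinery you sketch matches the paper's treatment of the monotone case that builds on top of it.
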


The following is our key lemma for the dynamic range mode algorithm. 

\begin{lemma}\label{lemma:mpqw-monotone}
	The Min-Plus-Query-Witness problem where $A$ is an $n \times n^\beta$ matrix, $B$ is an $n^\beta \times n$ monotone matrix with total range $O(n^{\beta+\eta})$, and the size of $S$ for each query is $O(n^\lambda)$,
	can be solved with
	\begin{itemize}
		\item $\tO(n^{\max\{1+\beta+\eta-\theta, \rho + \theta + \omega(1,\beta,1)+\beta-\sigma, 2+\beta-\rho\}})$ preprocessing time,
		\item $\tO(n^\lambda + n^{\rho + \sigma})$ worst-case time per query,
		\item and $\tO(n^{\max\{1+\beta+\eta-\theta,\rho + \max\{2+\beta+\theta, 1+2\beta\}-\sigma, 2+\beta-\rho\}})$ space,
	\end{itemize}
	for any constants $0\le \theta\le \eta$, $\rho \ge 0$ and $0\le \sigma \le \beta$.
\end{lemma}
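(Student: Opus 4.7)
The plan is to replay the three-phase framework from the proof of Theorem~\ref{thm:mmp-main}, substituting the bounded-entries Min-Plus-Query-Witness preprocessing of Lemma~\ref{lemma:mpqw-one-bounded} for the plain min-plus product used in its Phase~2.

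Phase~1 would set $W=\lfloor n^\theta\rfloor$ and apply Lemma~\ref{lemma:mmp-simple} to the downscaled matrices $\tilde A_{i,k}=\lfloor A_{i,k}/W\rfloor$ and $\tilde B_{k,j}=\lfloor B_{k,j}/W\rfloor$, producing an approximation $\tilde C$ with $\|\tilde C-A\star B\|_\infty\le 2W$ in $\tO(n^{1+\beta+\eta-\theta})$ time, since $\tilde B$ is still monotone and its total range shrinks to $O(n^{\beta+\eta-\theta})$. Phase~2 would perform $n^\rho$ rounds, picking each pivot column $j^r$ via the greedy rule of Section~\ref{sec:derand}, building $A^r,B^r$ exactly as in the proof of Theorem~\ref{thm:mmp-main}---so that the finite entries of $A^r$ all lie in $\{-3W,\ldots,3W\}$---and feeding this instance directly to Lemma~\ref{lemma:mpqw-one-bounded} with parameters $\theta$ and $\sigma$ (the dimensions already match). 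The $n^\rho$ resulting oracles $D_1,\ldots,D_{n^\rho}$ would use preprocessing time $\tO(n^{\rho+\theta+\omega(1,\beta,1)+\beta-\sigma})$ and space $\tO(n^{\rho+\max\{2+\beta+\theta,\,1+2\beta\}-\sigma})$ in aggregate. Phase~3 would run the witness-listing variant of the simple monotone product from the proof of Lemma~\ref{lemma:mmp-phase-3} to enumerate every moderately relevant uncovered triple $(i,k,j)$, and, for each $(i,j)$, store the list of such $k$'s sorted by the exact value $A_{i,k}+B_{k,j}$; this phase costs $\tO(n^{\max\{1+\beta+\eta-\theta,\,2+\beta-\rho\}})$ in time and space, matching the other terms in the claimed bounds.

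To answer a query $(i,j,S)$, I would first hash $S$ in $O(n^\lambda)$ time, then query each $D_r$ sharing the same hash table so that the per-round cost collapses to the $\tO(n^\sigma)$ scanning part of Lemma~\ref{lemma:mpqw-one-bounded}, for a total of $\tO(n^{\rho+\sigma})$; finally I would walk through the presorted Phase~3 list for $(i,j)$ and return the smaller of the best Phase~2 witness and the best Phase~3 witness avoiding $S$. Correctness is inherited from Theorem~\ref{thm:mmp-main}: because $A^r_{i,k}+B^r_{k,j}=A_{i,k}+B_{k,j}-\tilde C_{i,j^r}$ whenever $A^r_{i,k}\ne\infty$, each $D_r$ returns the best covered-by-$r$ witness not in $S$, and by Lemmas~\ref{lemma:strong-moderately} and~\ref{lemma:moderately-weak} the true optimum $k^*\notin S$ is either covered (and detected by some $D_r$) or moderately relevant and uncovered (and hence in the Phase~3 list).

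The main obstacle will be keeping the Phase~3 scan inside the claimed $\tO(n^{\rho+\sigma})$ budget, because Lemma~\ref{lemma:few-weak} and the derandomization only bound the number of uncovered moderately relevant triples in aggregate (and per-$j$), not per-$(i,j)$. I plan to resolve this by strengthening the greedy selection of Section~\ref{sec:derand}: in addition to the existing trigger, also commit a column $j$ whenever some individual pair $(i,j)$ has more than $n^{\rho+\sigma}$ uncovered moderately relevant $k$'s. Each such commit covers at least $n^{\rho+\sigma}$ fresh pairs $(i,k)$, so in the regime $\sigma\ge 1+\beta-2\rho$ relevant to our applications the overall number of rounds stays at $n^\rho$, preserving the preprocessing and space bounds while capping the per-$(i,j)$ Phase~3 list length at $n^{\rho+\sigma}$.
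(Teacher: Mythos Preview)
Your proposal has a genuine correctness gap in the very place where you claim correctness is ``inherited from Theorem~\ref{thm:mmp-main}''. You appeal to Lemma~\ref{lemma:strong-moderately} to conclude that the optimum $k^*\notin S$ is either covered or moderately relevant. But Lemma~\ref{lemma:strong-moderately} only applies to \emph{strongly relevant} $k$, i.e.\ witnesses of the unrestricted minimum $C_{i,j}$. In the Min-Plus-Query-Witness problem, the indices achieving the smallest $|S|$ values of $A_{i,k}+B_{k,j}$ may all lie in $S$, in which case $A_{i,k^*}+B_{k^*,j}$ can exceed $C_{i,j}$ by an arbitrary amount. Then $(i,k^*,j)$ is neither weakly relevant nor moderately relevant, so $k^*$ appears in no $A^r$ and in no Phase~3 list, and your query procedure simply cannot output it.

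The paper repairs exactly this point, and the fix is not a tweak of Phase~3 but of Phase~1. Instead of the minimum $\tilde C_{i,j}$, it uses $\tilde C'_{i,j}$, defined as $W$ times the $(n^\lambda+1)$-th smallest element of $\{\tilde A_{i,k}+\tilde B_{k,j}:k\}$, and it stores each multiset $L_{i,j}$ in a persistent balanced BST so that the $(n^\lambda+1)$ smallest elements can be listed at query time. With the threshold shifted up to the $(n^\lambda+1)$-th value, the optimum $k^*$ now falls into one of three cases: covered (handled by the $n^\rho$ oracles from Lemma~\ref{lemma:mpqw-one-bounded}, with the sets $S_r$ chosen disjointly so that $\sum_r|S_r|=|S|$); uncovered and ``almost relevant'' with respect to $\tilde C'$ (then $k^*$ is among the $|S|+1$ smallest entries of the stored set $T_{i,j}$); or uncovered with $\tilde A_{i,k^*}+\tilde B_{k^*,j}<\tfrac{1}{W}\tilde C'_{i,j}$, which forces $k^*$ among the $(n^\lambda+1)$ smallest of $L_{i,j}$. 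Your concern about per-$(i,j)$ Phase~3 list length is a red herring: one never scans the whole list, only its first $|S|+1$ entries, so the proposed strengthening of the greedy rule (and the extra constraint $\sigma\ge 1+\beta-2\rho$) is unnecessary. The missing ingredient is the shifted threshold $\tilde C'$ together with the persistent BSTs for $L_{i,j}$.
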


\begin{proof}
	Preprocessing is in three steps, corresponding to the three phases of the algorithm for Theorem~\ref{thm:mmp-main}.
	\subparagraph*{Preprocessing Step 1.}
	This step is slightly more complicated than Phase 1 of Theorem~\ref{thm:mmp-main}.
	Let $0\le \theta \le \eta$ be a parameter to be chosen.
	Let $W = \lfloor n^\theta \rfloor$.
	Define two matrices $\ti A$ and $\ti B$ as $\ti A_{i,k} = \lfloor \frac {A_{i,k}}W\rfloor$
	and $\ti B_{k,j} = \lfloor \frac {B_{k,j}}W\rfloor$.
	
	We iterate through $j\in [n]$, and in each iteration $j$, maintain for each $i\in [n]$ the set 
	$L_{i,j} := \{(\ti A_{i,k} + \ti B_{k,j}, k) : k\in [n^\beta]\}$
	using a \textit{persistent} balanced BST.
	
	Using the maintained information,
	we can compute a matrix $\ti C'$,
	where each entry $\ti C'_{i,j}$ is
	defined as $W$ times the $(n^\lambda+1)$-th smallest element in $\{\ti A_{i,k} + \ti B_{k,j} : k\in [n^\beta]\}$.
	Furthermore, for each $(i,j)$ and for any $t$, we can enumerate the $t$ smallest elements in $L_{i,j}$ in $\tO(t)$ time.
	
	Time complexity and space complexity of this step are both $\tO(n^{\max\{2, 1+\beta+\eta-\theta\}})$.

	\subparagraph*{Preprocessing Step 2.}
	As in Phase 2 of Theorem~\ref{thm:mmp-main}, we sample $j^r \in [n]$ for $r\in [\tO(n^\rho)]$
	for some parameter $\rho$ to be chosen, and compute matrices $A^r$ and $B^r$ as
\begin{align*}
&    A^r_{i,k} = 
\left\{
\begin{array}{ll}
A_{i,k} + B_{k,j^r} - \ti C'_{i,j^r} & \text{if } |A_{i,k} + B_{k,j^r} - \ti C'_{i,j^r}| \le 3W \text{ and } A^{r^\p}_{i,k}=\infty\forall r^\p<r,\\
\infty & \text{otherwise,}
\end{array}
\right.
\\
&B^r_{k,j} = 
\left\{
\begin{array}{ll}
B_{k,j} - B_{k,j^r} & \text{if } B_{k, j^r} \ne \infty,\\
0 & \text{otherwise.}
\end{array}
\right.
\end{align*}

	Then for each $r$, we apply the data structure in Lemma~\ref{lemma:mpqw-one-bounded} to $A^r$ and $B^r$.
	Running time for this step is $\tO(n^{\rho+\theta + \omega(1, \beta, 1) + \beta -\sigma})$.
	Space complexity for this step is $\tO(n^{\rho + \max\{2+\beta+\theta, 1+2\beta\}-\sigma})$.

	We note that this part can be derandomized as well.

	\subparagraph*{Preprocessing Step 3.}
	For a pair $(i,k)$ if $A_{i,k}^r\ne \infty$ for some $r$, we call $(i,k)$ \emph{covered}; otherwise we call it \emph{uncovered}.
	We call a triple $(i,k,j)$ \emph{almost relevant}
	if $0\le \ti A_{i,k} + \ti B_{k,j} - \frac 1W \ti C'_{i,j} \le 1.$
	Note that for such triples, we must have
	$|A_{i,k} + B_{k,j} - \ti C'_{i,j}| \le 3W.$
	So the number of uncovered and almost relevant triples is $\tO(n^{2+\beta-\rho})$.
	
	We run an algorithm similar to Lemma~\ref{lemma:mmp-phase-3} to enumerate all uncovered and almost relevant triples $(i,k,j)$.
	Then for each $i, j\in [n]$, we use a balanced BST to store the set $T_{i,j} := \{(A_{i,k}+B_{k,j}, k) : \text{$(i,k,j)$ is uncovered and almost relevant}\}.$

	Running time for this step is $\tO(n^{\max\{2, 1+\beta+\eta-\theta, 2+\beta-\rho\}})$.
	Space complexity for this step is $\tO(n^{2+\beta-\rho})$.

	\subparagraph*{Query.}
	Now let us describe how to handle a query.
	Let $(S,i,j)$ be a query, and $k^*$ be an optimal index, i.e.,
	$A_{i,k^*} + B_{k^*,j} = \min_{k\not \in S} \{A_{i,k} + B_{k,j}\}.$
	There are three cases.

	\subparagraph*{Case 1: $(i,k^*)$ is covered.}
	For each $r$, we query the data structure (Lemma~\ref{lemma:mpqw-one-bounded}) for $A^r$ and $B^r$ with
	$(S_r, i, j)$
	where $S_r = S \cap \{k : A^r_{i,k}\ne \infty\}$.
	Because finite entries of $A^r$ are disjoint for different $r$,
	we have $\sum_r |S_r| = |S|$.
	So the total query time for this case is $\tO(|S| + n^{\rho+\sigma})$.

	\subparagraph*{Case 2: $(i,k^*)$ is uncovered and almost relevant.}
	In this case $k^*$ must be among the $(|S|+1)$ smallest elements in $T_{i,j}$.
	We deal with this case by enumerating the $(|S|+1)$ smallest elements in $T_{i,j}$.
	The query time for this case is $\tO(|S|)$.
	
	\subparagraph*{Case 3: $(i,k^*)$ is uncovered and not almost relevant.}
	Note that by definition of $\ti C'_{i,j}$, in this case we must have $\ti A_{i,k^*} + \ti B_{k^*,j} - \frac 1W \ti C'_{i,j} \le -1$.
	Therefore $k^*$ must be among the $(n^\lambda+1)$ smallest elements in $L_{i,j}$.
	We deal with this case by enumerating the $(n^\lambda+1)$ smallest elements in $L_{i,j}$.
	The query time for this case is $\tO(n^\lambda)$.

	\subparagraph*{Summary.}
	Total preprocessing time is $\tO(n^{\max\{1+\beta+\eta-\theta,\rho+\theta + \omega(1, \beta, 1) + \beta -\sigma, 2+\beta-\rho\}}).$
	Space complexity is $\tO(n^{\max\{1+\beta+\eta-\theta,\rho + \max\{2+\beta+\theta, 1+2\beta\}-\sigma, 2+\beta-\rho\}}).$
	Each query costs $\tO(n^\lambda + n^{\rho+\sigma})$ time.
\end{proof}

\begin{proof}[Proof of Theorem~\ref{thm:dynamic-range-mode}]
	The algorithm is exactly the same as in \cite{SX20}, except for replacing \cite[Lemma 11]{SX20}
	with Lemma~\ref{lemma:mpqw-monotone}.
	We skip most of the algorithm description and focus on analyzing the time and space complexity.

	For the algorithm we need to choose three constants $t_1, t_2, t_3\in [0, 1]$.
	The algorithm has three parts.

	\subparagraph*{Part 1: Infrequent values.}
	In the first part, we handle values that appear at most $n^{1-t_1}$ times.
	By maintaining $n^{1-t_1}$ balanced BSTs, this part can be done in $\tO(n^{2-2t_1})$ time per operation.
	Space complexity is $\tO(n^{2-t_1})$.

	\subparagraph*{Part 2: Newly modified values.}
	We maintain a data structure holding frequent values that rebuilds every $n^{t_2}$ operations.
	The data structure is discussed in Part 3.
	In Part 2, we deal with values that have been modified in the last $n^{t_2}$ operations.
	This part can be done in $\tO(n^{t_2})$ time and $\tO(n^{t_2})$ space.

	\subparagraph*{Part 3: Data structure.}
	In Part 3, we build the data structure.
	For this part, we call Lemma~\ref{lemma:mpqw-monotone} where
	$A$ is an $n^{1-t_3}\times n^{t_1}$ matrix,
	$B$ is an $n^{t_1}\times n^{1-t_3}$ monotone matrix with total range $O(n)$,
	and in the queries we have $|S| = O(n^{t_2})$.
	The dimensions $\alpha=\gamma=1-t_3$ come from choosing $n^{1-t_3}$ evenly spaced points in $[n]$, and $\beta=t_1$ comes from the $O(n^{t_1})$ values which appear more than $n^{1-t_1}$ times.
	In a query we pick the maximum interval whose endpoints are chosen points inside the query interval, and $S$ is the set of values modified in the last $n^{t_2}$ operations.
	
	Rebuilding costs
	$\tO(n^{(1-t_3)\max\{1+\beta+\eta-\theta, \rho + \theta + \omega(1,\beta,1)+\beta-\sigma, 2+\beta-\rho\}-t_2})$
	time per operation\footnote{The rebuilding is done every $n^{t_2}$ operations, so a priori we get an amortized time complexity.
	However, as pointed out in \cite{SX20}, we can use the global rebuilding technique of \cite{overmars1987design}
	to achieve a worst-case time bound.},
	where $\beta = \frac {t_1}{1-t_3}$, $\eta = \frac{1-t_1}{1-t_3}$,
	and $0\le \theta\le \eta$, $\rho \ge 0$, $0\le \sigma \le \beta$ are constants to be chosen.
	One query in Lemma~\ref{lemma:mpqw-monotone} costs $\tO(n^{t_2} + n^{(1-t_3)(\rho + \sigma)})$ time.
	Besides the above, we also need $\tO(n^{t_3})$ time per operation to deal with elements not covered by the maximum interval of chosen points inside the query interval.
	
	Space cost of the data structure is $\tO(n^{(1-t_3)\max\{\rho + \max\{2+\beta+\theta, 1+2\beta\}-\sigma, 2+\beta-\rho\}}).$

	\subparagraph*{Summary.}
	Running time per operation is
	\[\tO(n^{2-2t_1} + n^{t_2} + n^{(1-t_3)\max\{1+\beta+\eta-\theta, \rho + \theta + \omega(1,\beta,1)+\beta-\sigma, 2+\beta-\rho\}-t_2}
	+ n^{(1-t_3)(\rho+\sigma)} + n^{t_3}),\]
	subject to
	$t_1,t_2,t_3\in [0, 1]$,
	$\beta = \frac {t_1}{1-t_3}$, $\eta = \frac{1-t_1}{1-t_3}$,
	$0\le \theta\le \eta$, $\rho \ge 0$, $0\le \sigma \le \beta$.
	
	Space cost is
	$\tO(n^{2-t_1} + n^{t_2} + n^{(1-t_3)\max\{\rho + \max\{2+\beta+\theta, 1+2\beta\}-\sigma, 2+\beta-\rho\}}).$
    
    As an observation, in the optimum case we have $\beta \ge 1$, so we may use $\omega(1, \beta, 1) \le \omega + \beta - 1$. 
    As a result, we can set $t_1=\frac{\omega+21}{2\omega+30}, t_2 = t_3 = \frac{\omega+9}{\omega+15}, \theta = \frac{3-\omega}{6}, \rho = \frac{3-\omega}{4}$ and $\sigma = \frac{5\omega+9}{12}$ to upper bound the running time by $\tO(n^{\frac{\omega+9}{\omega+15}})$. The space complexity is dominated by Part 1, which is $\tO(n^{2-t_1}) = \tO(n^{\frac{3\omega+39}{2\omega + 30}})$.
    
    Taking $t_1=0.67385$, $t_2=t_3=0.6523$, $\theta = 0.1239$, $\rho=0.1859$, $\sigma=1.6902$ and using fast rectangular matrix multiplication
	we get $O(n^{0.6524})$ running time per operation.
	The space complexity is again dominated by Part 1, which is $\tO(n^{2-t_1}) = O(n^{1.3262})$.
\end{proof}

\bibliography{ref}




\end{document}